\theoremstyle{plain}
\newtheorem{theorem}{Theorem}
\newtheorem{prop}{Proposition}
\newtheorem{lemma}{Lemma}
\newtheorem{cor}{Corollary}
\newtheorem{conj}{Conjecture}
\theoremstyle{definition}
\newtheorem{remark}{Remark}
\newcommand{\beq}{\begin{equation}}
\newcommand{\eeq}{\end{equation}}
\newcommand{\nn}{\nonumber}
\newcommand{\Q}{\mathcal{Q}}
\newcommand{\C}{\mathcal{C}}
\newcommand{\QQ}{{\mathbb Q}}
\newcommand{\CC}{{\mathbb C}}
\newcommand{\M}{{\mathcal M}}
\newcommand{\bt}{{\bf t}}
\newcommand{\bs}{{\bf s}}
\newcommand{\e}{\epsilon}
\newcommand{\tu}{\tilde{u}}
\def\H{\mathcal{H}}
\def\MV{{\rm Vol} \, \Q}
\newcommand{\p}{\partial}
\def\={\; = \;}
\def\+{\; + \;}
\def\:={\, := \, }
\def\m{\,-\,}
\def\vv{v} 
\def\HHod{\H}
\def\Hod{H}
\newcolumntype{M}[1]{>{\centering\arraybackslash}m{#1}}
\newcolumntype{R}[1]{>{\raggedleft\arraybackslash}m{#1}}
\newcolumntype{N}{@{}m{0pt}@{}} 
\begin{document}
\title[Masur--Veech volumes]{Masur--Veech volumes of quadratic differentials and their asymptotics}
\author{Di Yang,  Don Zagier, Youjin Zhang}
\dedicatory{Dedicated to the memory of Boris Anatol'evich Dubrovin}
\maketitle
\begin{abstract}
Based on the Chen--M\"oller--Sauvaget formula, we apply the theory of integrable systems 
to derive three equations for the generating series of the 
Masur--Veech volumes ${\rm Vol} \, \Q_{g,n}$ associated with the principal strata 
of the moduli spaces of quadratic differentials, 
and propose refinements of the 
conjectural formulas given in~\cite{DGZZ,ADGZZ}
 for the large genus asymptotics
 of ${\rm Vol} \, \Q_{g,n}$ and of the associated area Siegel--Veech constants.
\end{abstract}

\medskip

\setcounter{tocdepth}{1}
\tableofcontents

\section{Statements of the results}
Let $\M_{g,n}$ denote the moduli space of complex 
algebraic curves of genus~$g$ with~$n$ distinct marked points, 
and $\Q_{g,n}$ the moduli space of pairs $(\C,q)$, 
where $\C \in \M_{g,n}$ is a smooth algebraic curve and $q$ is a 
meromorphic quadratic differential on~$\C$ with only simple poles at the marked points. 
This moduli space of quadratic differentials~$\Q_{g,n}$ 
is endowed with the canonical symplectic structure. 
The induced volume element on $\Q_{g,n}$ is called 
the Masur--Veech (MV) volume element. 
Denote by~$\MV_{g,n}$ the volume of~$\Q_{g,n}$; see e.g.~\cite{DGZZ} for its meaning.
Recently,  Chen--M\"oller--Sauvaget~\cite{CMS} proved that  
the volumes ${\rm Vol} \, \Q_{g,n}$ with $2g-2+n>0$ 
can be expressed in terms of linear Hodge integrals as follows:
\beq\label{CMSformula}
{\rm Vol} \, \Q_{g,n} \= 2^{2g+1} \frac{\pi^{6g-6+2n} (4g-4+n)!}{(6g-7+2n)!} \,
\sum_{j=0}^g \int_{\overline{\mathcal{M}}_{g,3g-3+2n-j}}  
\frac{\lambda_j  \psi_{n+1}^2 \cdots \psi_{3g-3+2n-j}^2}{(3g-3+n-j)!} \,,
\eeq
where $\overline{\mathcal{M}}_{g,k}$ denotes the Deligne--Mumford 
compactification of~$\mathcal{M}_{g,k}$,  
$\psi_i$ denotes the first Chern class of the $i_{\rm th}$ tautological line 
bundle on $\overline{\mathcal{M}}_{g,k}$, and $\lambda_j$ denotes the 
$j_{\rm th}$ Chern class of the rank~$g$ Hodge bundle $\mathbb{E}_{g,k}$ 
on $\overline{\mathcal{M}}_{g,k}$.  
The goal of the present paper is to study the numbers 
${\rm Vol} \, \Q_{g,n}$ by using the Chen--M\"oller--Sauvaget (CMS) formula.

For $g,n\geq 0$, we define 
\beq\label{definisionagn}
a_{g,n} \= \begin{cases}
\sum_{j=0}^g \frac1{(3g-3+n-j)!}\int_{\overline{\mathcal{M}}_{g,3g-3+2n-j}} 
\lambda_j  \psi_{n+1}^2 \cdots \psi_{3g-3+2n-j}^2 \,, & 2g-2+n>0\,, \\
0\,, & {\rm otherwise.}\\
\end{cases}
\eeq
Note that the $a_{g,n}$ are rational numbers, and differ from ${\rm Vol} \, \Q_{g,n}$ only by 
some simple factors. Define a generating series $\H(x,\e)$ for the numbers $a_{g,n}$, called the MV free energy, by 
\beq\label{defintionofHxe}
\H(x,\e) \:= \sum_{g,n\geq0} \e^{2g-2} \frac{x^n}{n!} a_{g,n} \,.
\eeq
The first result of this paper is then given by the following theorem.
\begin{theorem} \label{thmequation}
The series $\H(x,\e)$ satisfies the following two equations:
\begin{align}
& \bigl[\p_x(\H_+\,-\, \H_-)\bigr]^2 \+ \p_x^2 \bigl( \H_+ + \H_-\bigr) \= \frac{2x}{\e^2} \,,  \label{theequation} \\
& \biggl(\e  \p_\e + \frac12x\p_x - \frac{\e^2}{24} \p_x^3\biggr)  \bigl(\H_{+} - \H_{-}\bigr)    \+ 
\frac{\e^2}{12}  \bigl[\p_x(\H_{+}-\H_{-})\bigr]^3   \=  0  \,, \label{anotherequation} 
\end{align} 
where 
$\H_{\pm} := \H\bigl(x\pm\frac{i \e}{2},\e\bigr)$.
\end{theorem}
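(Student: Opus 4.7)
My strategy is to use the integrable-systems structure of Hodge integrals. Since $\sum_{j=0}^g \lambda_j = c(\mathbb{E})$ is the total Chern class of the Hodge bundle, the definition of $a_{g,n}$ rewrites as
\[
\H(x,\e) \= \sum_{g,n,k\geq 0} \frac{\e^{2g-2}\,x^n}{n!\,k!} \int_{\overline{\M}_{g,n+k}} c(\mathbb{E})\,\psi_{n+1}^2\cdots\psi_{n+k}^2,
\]
identifying $\H$ as the specialization (at $t_0=x$, $t_2=1$, $t_i=0$ otherwise) of the Hodge partition function with $c(\mathbb{E})$-insertion. Moreover, the equalities $\H_\pm = e^{\pm(i\e/2)\p_x}\H$ give $\H_+ - \H_- = 2i\sin((\e/2)\p_x)\H$ and $\H_+ + \H_- = 2\cos((\e/2)\p_x)\H$. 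The appearance of trigonometric shift operators is the signature of the Mumford expansion $\mathrm{ch}_{2k-1}(\mathbb{E}) \sim B_{2k}/(2k)!$, and this is what makes $\pm i\e/2$ the natural shift.

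I would then invoke the integrable structure of the Hodge partition function. Via Mumford's formula and the Faber--Pandharipande/Kazarian change of variables, the $c(\mathbb{E})$-insertion is expressible by acting on the Witten--Kontsevich $\tau$-function (a KdV $\tau$-function) with a specific shift of times --- exactly the shift $x\mapsto x\pm i\e/2$ at our specialization. Within this framework, I expect equation \eqref{theequation} to arise as a Hirota-bilinear identity, equivalently the string equation in difference form, and \eqref{anotherequation} to arise as the sum of the dilaton equation (yielding the Euler operator $\e\p_\e + \frac12 x\p_x$) and the first KdV flow evaluated at our specialization (yielding the dispersive term $-\frac{\e^2}{24}\p_x^3$ and the cubic nonlinearity $\frac{\e^2}{12}[\p_x(\H_+-\H_-)]^3$). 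As a sanity check, at leading order in $\e$, equation \eqref{theequation} reduces to $(\p_x^2 \H_0 - 1)^2 = 1 - 2x$, giving $\p_x^2 \H_0 = 1 - \sqrt{1-2x}$, which matches the explicit genus-$0$ computation $a_{0,n} = (2n-6)!/[(n-3)!\,2^{n-3}]$ via the standard formula $\int_{\overline{\M}_{0,N}}\prod\psi_i^{d_i} = \binom{N-3}{d_1,\ldots,d_N}$.

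The main obstacle is making the reduction to KdV precise at our specialization and controlling the dispersive tower. A priori the Hodge partition function satisfies a hierarchy with infinitely many dispersive corrections, and one must explain why, at the specific values $t_0=x$, $t_2=1$, $t_i=0$ otherwise, the tower collapses to exactly the two displayed equations. I would expect this truncation to be enforced by the rigid dimension constraint built into the CMS formula (namely $j+2k = 3g-3+n+k$), which pins down the specialization and makes the identities \eqref{theequation} and \eqref{anotherequation} effective. Verifying the precise coefficients $\frac{1}{24}$ and $\frac{1}{12}$ in \eqref{anotherequation}, and not merely their structural form, is the delicate technical heart of the argument.
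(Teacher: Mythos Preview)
Your intuitions are pointed in the right direction—string equation for the first identity, dilaton/homogeneity for the Euler piece of the second, and Bernoulli numbers behind the trigonometric shifts—but the proposal has a genuine gap at exactly the point you flag as ``the main obstacle.'' The dispersive tower does \emph{not} truncate; it resums. The paper's mechanism for this resummation is Buryak's result that the linear Hodge hierarchy is normal Miura equivalent to the intermediate long wave (ILW) hierarchy. Concretely, the Miura transformation
\[
\tilde u \;=\; \sum_{g\ge 0} \frac{(-1)^g\,\e^{2g}}{2^{2g}(2g+1)!}\,\p_x^{2g} u,
\qquad u=\e^2\p_x^2\H,
\]
is precisely $\tilde u = -\,i\e\,\p_x(\H_+-\H_-)$. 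The string equation, transported through this Miura map and evaluated at $t_0=x$, $t_2=1$, reads
\[
x + \tfrac12\tilde u^2 + \sum_{g\ge 1}\e^{2g}\frac{|B_{2g}|}{(2g)!}\p_x^{2g}\tilde u \;=\; \tilde u,
\]
and the Bernoulli generating-function identity $\sum_{g\ge 1}\e^{2g}\frac{|B_{2g}|}{(2g)!}\p_x^{2g} = 1 - \tfrac{i\e}{2}\p_x - \frac{i\e\p_x}{e^{i\e\p_x}-1}$ converts this into equation~\eqref{theequation}. Your reference to the Kazarian/Faber--Pandharipande change of variables goes to KP by reshuffling \emph{all} times, whereas what is needed here is a Miura transformation in the single variable~$x$; without the ILW ingredient you do not obtain the specific shift $x\mapsto x\pm i\e/2$ and the identity stays an infinite tower.

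For equation~\eqref{anotherequation} your guess ``dilaton $+$ first KdV flow'' is not quite right. The paper uses the $t_2$-flow of ILW (not $t_1$), together with the string-type relation just derived and a separate homogeneity identity for $\H(x,t,\e)$ with the $t_2$-time $t$ restored. Combining these three eliminates the $\p_t$-derivative and produces the cubic term with its coefficient $\e^2/12$; the dilaton intuition alone does not pin down that coefficient. In short, the missing idea is the ILW Miura map and the use of the $t_2$-flow, both of which are what make the infinite dispersive series collapse to the closed forms displayed.
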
 

A statement equivalent to equation~\eqref{theequation} is given by the following corollary. 
\begin{cor}\label{thmrecursion}
For all $g\ge0$ and $n\geq 2$, the numbers $a_{g,n}$ 
can be uniquely determined by the following recursion relation
\begin{align}
&  a_{g,q+2} \= \frac{q!}2 \sum_{g_1+g_2+j_1+j_2=g \atop n_1+n_2=q+4+2(j_1+j_2)} 
\frac{(-1)^{j_1+j_2}  a_{g_1,n_1} a_{g_2,n_2} }{4^{j_1+j_2} (2j_1+1)!(2j_2+1)! (n_1-2j_1-2)!(n_2-2j_2-2)!}  \nn\\
& \qquad \qquad \,-\, \sum_{j=1}^g \frac{(-1)^j a_{g-j,q+2j+2}}{4^j (2j)!}   \+ \delta_{q,1} \delta_{g,0}   
 \label{recursionagn}
\end{align}
along with the boundary condition $a_{0,2}=0$ (cf.~\eqref{definisionagn}), where $q\geq 0$.
\end{cor}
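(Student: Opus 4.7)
The plan is to regard equation~\eqref{theequation} as an identity of formal power series in $x$ and $\e$ and to match the coefficients of $\e^{2g-2}x^{q}/q!$ on both sides against the recursion~\eqref{recursionagn}.

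\emph{Step 1: Taylor expansion of $\H_\pm$.} Since $\H_\pm=\H(x\pm i\e/2,\e)$, Taylor-expanding around $x$ and differentiating in $x$ gives
\begin{align*}
\p_x\bigl(\H_+-\H_-\bigr) &= i\sum_{k\geq 0}\frac{(-1)^{k}\,\e^{2k+1}}{4^{k}\,(2k+1)!}\,\p_x^{2k+2}\H,\\
\p_x^{2}\bigl(\H_++\H_-\bigr) &= 2\sum_{k\geq 0}\frac{(-1)^{k}\,\e^{2k}}{4^{k}\,(2k)!}\,\p_x^{2k+2}\H.
\end{align*}
Substituting into~\eqref{theequation} turns its left-hand side into a polynomial expression in $\e$ and derivatives of~$\H$: the square produces a minus-signed double sum indexed by $(j_1,j_2)$ with weight $\e^{2(j_1+j_2)+2}/[4^{j_1+j_2}(2j_1+1)!(2j_2+1)!]$, and $\p_x^{2}(\H_++\H_-)$ is a linear sum indexed by~$j$ with weight $2\e^{2j}/[4^{j}(2j)!]$.

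\emph{Step 2: coefficient extraction.} From~\eqref{defintionofHxe} one has $\p_x^{k}\H=\sum_{g,n}\e^{2g-2}a_{g,n}x^{n-k}/(n-k)!$ (terms with $n<k$ understood to vanish). Substituting this into the expression from Step~1 and comparing the coefficient of $\e^{2g-2}x^{q}/q!$ on both sides: the right-hand side $2x/\e^{2}$ contributes $2\delta_{g,0}\delta_{q,1}$; the bilinear term produces exactly the double sum over $(g_1,g_2,n_1,n_2,j_1,j_2)$ with $g_1+g_2+j_1+j_2=g$ and $n_1+n_2=q+4+2(j_1+j_2)$ that appears in~\eqref{recursionagn}; and the linear term produces $2\sum_{j=0}^{g}(-1)^{j}a_{g-j,q+2j+2}/[4^{j}(2j)!]$. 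Isolating the $j=0$ piece, namely $2a_{g,q+2}$, and dividing by~$2$ yields precisely~\eqref{recursionagn}.

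\emph{Step 3: uniqueness.} To see that~\eqref{recursionagn} together with $a_{0,2}=0$ determines every $a_{g,n}$ with $n\geq 2$, I would induct on~$g$, and for fixed $g$ on $n=q+2$. The subtraction sum only involves $a_{g-j,\,\cdot}$ with $j\geq 1$, known by the outer induction. In the bilinear sum the indices satisfy $g_1+g_2\leq g$, and a contribution in which $a_{g,\,\cdot}$ itself appears requires $g_1=g$ (or symmetrically $g_2=g$), forcing $j_1=j_2=g_2=0$ and $n_1+n_2=q+4$; the extremal case $n_1=q+2$ pairs $a_{g,q+2}$ with $a_{0,2}=0$ and drops out, while the remaining values of $n_1$ are strictly less than $q+2$, known by the inner induction. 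Hence $a_{g,q+2}$ is uniquely expressed in terms of previously determined quantities.

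\emph{Main obstacle.} The derivation in Steps~1--2 is essentially bookkeeping. The delicate point is Step~3: one must verify that the boundary value $a_{0,2}=0$ is precisely what kills the otherwise implicit ``diagonal'' contributions $a_{g,q+2}\cdot a_{0,2}$ in the bilinear sum, making the recursion genuinely explicit for every $(g,q)\neq(0,0)$; at $(g,q)=(0,0)$ the recursion degenerates to $a_{0,2}=a_{0,2}^{2}/2$ and the boundary condition is needed to pick out the root~$0$.
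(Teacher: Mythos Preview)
Your proposal is correct and is exactly the argument the paper has in mind: the corollary is presented there as ``a statement equivalent to equation~\eqref{theequation}'', with no separate proof, so the intended derivation is precisely the Taylor expansion of $\H_\pm$ and coefficient matching that you carry out in Steps~1--2. Your Step~3 on uniqueness, including the observation that the diagonal term $a_{g,q+2}\,a_{0,2}$ vanishes by the boundary condition, makes explicit what the paper leaves to the reader.
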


Another corollary of Theorem~\ref{thmequation} 
is the following non-linear differential equation for the series~$\H$.
\begin{cor}\label{Kazdeduced} 
The series $\H=\H(x,\e)$ satisfies the following equation:
\beq \label{deducedfrom}
 \e \p_\e \p_x (\H) \+ x  \p_x^2 ( \H) \+ \frac12  \p_x(\H) \,-\, \frac{\e^2}4  \bigl[\p_x^2(\H)\bigr]^2 \,-\, \frac{\e^2}{24}  \p_x^4(\H)  \= 0 \,.
\eeq
\end{cor}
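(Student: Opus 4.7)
The plan is to derive~\eqref{deducedfrom} from \eqref{theequation} and \eqref{anotherequation} by differentiating the second equation with respect to $x$, eliminating the cubic term via the first, and then ``unshifting.'' First, apply $\p_x$ to \eqref{anotherequation}; noting that $\p_x(\tfrac{x}{2}\p_x f) = \tfrac{1}{2}\p_x f + \tfrac{x}{2}\p_x^2 f$, I obtain
\[
\bigl(\e\p_\e\p_x + \tfrac12\p_x + \tfrac{x}{2}\p_x^2 - \tfrac{\e^2}{24}\p_x^4\bigr)(\H_+-\H_-) + \tfrac{\e^2}{4}[\p_x(\H_+-\H_-)]^2\p_x^2(\H_+-\H_-) \= 0.
\]
Substituting $[\p_x(\H_+-\H_-)]^2 = 2x/\e^2 - \p_x^2(\H_++\H_-)$ from \eqref{theequation} and using the algebraic identity $\p_x^2(\H_++\H_-)\,\p_x^2(\H_+-\H_-) = (\p_x^2\H_+)^2 - (\p_x^2\H_-)^2$, this reduces to $E(\H_+) - E(\H_-) \= 0$, where $E(\H)$ denotes the left-hand side of~\eqref{deducedfrom}.

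Next, writing $T^{\pm}$ for the shift $x \to x \pm i\e/2$, a direct chain-rule computation shows $T^\pm E(\H) = E(\H_\pm)$: the derivative $\p_\e\H_\pm = \p_\e[\H(x\pm i\e/2,\e)]$ produces a correction $\pm\tfrac{i}{2}\p_x\H_\pm$ which, after one further $\p_x$, contributes $\pm\tfrac{i\e}{2}\p_x^2\H_\pm$ to $\e\p_\e\p_x\H_\pm$; this cancels exactly against the $\pm i\e/2$ shift of the explicit $x$ inside $x\p_x^2\H$. Hence $(T^+-T^-)E(\H) = 0$. Since $T^+-T^- = 2i\sin(\e\p_x/2)$ and this operator annihilates only $x$-constants (at each order in $\e$, by an elementary induction on the $\e$-order), $E(\H)$ is independent of $x$.

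It remains to show that $C(\e) := E(\H)|_{x=0}$ is identically zero. A short computation gives
\[
\bigl[C(\e)\bigr]_{\e^{2G-2}} \= \bigl(2G-\tfrac{3}{2}\bigr) a_{G,1} - \tfrac{1}{4}\!\!\sum_{g+h=G}\! a_{g,2}\,a_{h,2} - \tfrac{1}{24}\,a_{G-1,4}.
\]
Evaluating \eqref{anotherequation} at $x=0$ and extracting the coefficient of $\e^{2G-1}$ produces a recursion expressing $a_{G,1}$ in terms of $a_{g,n}$ with $g < G$ and $n \geq 3$; combined with Corollary~\ref{thmrecursion} (which determines all $a_{g,n}$ for $n \geq 2$), this verifies $C_G = 0$ by induction on $G$. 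The base case $G=0$ is immediate from $a_{0,1} = a_{0,2} = 0$; the case $G=1$ reduces to $a_{1,1} = 1/12$, which follows from the order-$\e$ part of~\eqref{anotherequation} evaluated at $x = 0$.

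The main obstacle is this final step. The earlier manipulations are purely algebraic identities in $\H_+$ and $\H_-$, but deducing $E(\H) = 0$ from the mere $x$-independence of $E(\H)$ requires a separate evaluation of the integration constant $C(\e)$ using~\eqref{anotherequation} at $x = 0$ order by order in $\e$; matching the resulting recursion for $a_{G,1}$ with the structural formula for $C_G$ is what prevents the corollary from being a one-line consequence.
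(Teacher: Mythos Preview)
Your derivation up to the point $E(\H)=C(\e)$ matches the paper's proof essentially line for line: differentiate~\eqref{anotherequation}, substitute via~\eqref{theequation}, recognize the result as $E(\H_+)-E(\H_-)$, and use the commutation of the shift with the operator $E$ (the paper writes this as the commutator $[x\p_x,e^{\pm i\e\p_x/2}]=\mp\tfrac{i\e}{2}e^{\pm i\e\p_x/2}\p_x$) together with the invertibility of $(T^+-T^-)/\p_x$ to conclude $E(\H)$ is constant in~$x$.

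The genuine gap is in your final step. You propose to pin down $C(\e)$ by evaluating~\eqref{anotherequation} at $x=0$ to obtain a recursion for $a_{G,1}$, and then to ``verify $C_G=0$ by induction on~$G$.'' But you never supply the inductive step. What you have is: a formula for $a_{G,1}$ coming from~\eqref{anotherequation}$|_{x=0}$, and the separate formula $C_G=(2G-\tfrac32)a_{G,1}-\tfrac14\sum a_{g,2}a_{h,2}-\tfrac1{24}a_{G-1,4}$. Substituting the first into the second yields an expression in the $a_{g,n}$ with $n\ge2$ (not only $n\ge3$, by the way --- the cubic term contributes $a_{g,2}$'s), and you would then need to show this expression vanishes identically using only~\eqref{theequation}. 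That is a nontrivial combinatorial identity which you have not established; checking $G=0,1$ is not an induction. In fact, \eqref{theequation} and~\eqref{anotherequation} together tell you only that $E(\H)$ is $x$--independent; evaluating~\eqref{anotherequation} at $x=0$ recovers no further information about $C(\e)$, because once you know $E(\H)=C(\e)$ you already have $\p_x A=(T^+-T^-)E(\H)=0$ and hence $A\equiv A(0)=0$ automatically.

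The paper closes this gap by a different, structural argument: it invokes Lemma~\ref{Txstructure}, which says that for $g\ge2$ each $\H_g$ is a polynomial in $1/T$ (with $T=\sqrt{1-2x}$) having no constant term. Since every term of $E(\H)$ at order $\e^{2g-2}$ is then also such a polynomial, the only way it can be constant in~$x$ is to be identically zero; the cases $g=0,1$ are checked directly from the explicit formulas~\eqref{h0hodgeT},~\eqref{h1hodgeT}. That structural input about the shape of $\H_g$ is what your argument is missing.
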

\noindent The proof will be given in Section~\ref{section3}. 
We also show there that equation~\eqref{deducedfrom} implies a recursion given by Kazarian in~\cite{Kazarian} for the Hodge integrals   
\[\frac{(5g-3-j)(5g-5-j)}{(3g-3-j)!}\int_{\overline{\mathcal{M}}_{g,3g-3-j}} 
\lambda_j  \psi_{1}^2 \cdots \psi_{3g-3-j}^2\,,\quad 0\le j\le g \, .\]

A third corollary of Theorem~\ref{thmequation} (which apart from the boundary conditions 
is in fact equivalent to equation~\eqref{deducedfrom}) is the following recursion for the numbers~$a_{g,n}$. 
\begin{cor}\label{thmrecursion}
For all $g\geq 0$ and $n\geq 1$, the numbers $a_{g,n}$ are given recursively by
\begin{align}
&  a_{g, n} \= 
\frac{1}{2} \sum_{g_1,g_2\ge 0 \atop g_1+g_2=g} \sum_{ n_i\geq 2,  (g_i,n_i)\neq (0,3), i=1,2 \atop n_1+n_2=n+3}  
\binom{n-1}{n_1-2}
\, \frac{a_{g_1,n_1} a_{g_2,n_2}}{4g-4+n}   \+  \frac1{12} \, \frac{a_{g-1,n+3}}{4g-4+n}  
 \label{recursionagntwo} 
\end{align}
if $2g-2+n > 0$, $(g,n) \not\in \{ (0,3) , \,  (0,4) \}$, $a_{0,3}=a_{0,4}=1$ and $a_{0,1}=a_{0,2}=a_{-1,n}=0$.
\end{cor}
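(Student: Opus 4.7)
The plan is to derive the recursion~\eqref{recursionagntwo} directly from equation~\eqref{deducedfrom} of Corollary~\ref{Kazdeduced}, closing the system with the boundary values $a_{0,3}=a_{0,4}=1$ which are read off from the definition~\eqref{definisionagn}.

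Using $\p_x^k\H = \sum_{g,n\ge 0} \e^{2g-2}\frac{x^n}{n!}\,a_{g,n+k}$, I would extract the coefficient of $\e^{2g-2}x^n/n!$ in each term of~\eqref{deducedfrom}. The three linear pieces $\e\p_\e\p_x(\H) + x\p_x^2(\H) + \tfrac12\p_x(\H)$ combine to $(2g-\tfrac32+n)\,a_{g,n+1}$; the term $-\tfrac{\e^2}{24}\,\p_x^4(\H)$ contributes $-\tfrac{1}{24}\,a_{g-1,n+4}$; and the quadratic term $-\tfrac{\e^2}{4}[\p_x^2(\H)]^2$ expands by a Cauchy product to $-\tfrac14\sum_{g_1+g_2=g,\,n_1+n_2=n+4,\,n_i\ge 2}\binom{n}{n_1-2}a_{g_1,n_1}a_{g_2,n_2}$. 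Shifting $n\mapsto n-1$ and multiplying by~$2$, the vanishing of the total coefficient gives the preliminary identity
\[
(4g - 5 + 2n)\,a_{g,n} \= \tfrac12\sum_{\substack{g_1+g_2=g\\ n_1+n_2=n+3\\ n_i\ge 2}}\binom{n-1}{n_1-2}a_{g_1,n_1}a_{g_2,n_2} \+ \tfrac{1}{12}\,a_{g-1,n+3}.
\]

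The key remaining step is to reorganize this into the form of~\eqref{recursionagntwo} by isolating the convolution terms with $(g_i,n_i)=(0,3)$. Using $a_{0,3}=1$, each of the two ordered pairs with $(g_i,n_i)=(0,3)$ contributes $(n-1)\,a_{g,n}$ to the right-hand side when $(g,n)\ne (0,3)$; transposing these to the left converts the prefactor $4g-5+2n$ into $4g-4+n$ and installs the exclusion $(g_i,n_i)\ne(0,3)$ on the surviving sum. Dividing by $4g-4+n$ then yields~\eqref{recursionagntwo}, and this division is legal precisely when $(g,n)\notin\{(0,3),(0,4)\}$, the two pairs at which $4g-4+n$ vanishes. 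The seed values $a_{0,3}=\int_{\overline{\M}_{0,3}} 1 = 1$ and $a_{0,4}=\int_{\overline{\M}_{0,5}}\psi_5^2 = 1$ follow immediately from~\eqref{definisionagn}, while $a_{0,1}=a_{0,2}=0$ and $a_{-1,n}=0$ are built into that definition.

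The only delicate point in the argument is this transposition step: one must recognise and move exactly the two $(0,3)$-type terms in order to convert $4g-5+2n$ into $4g-4+n$, and separately record the two exceptional pairs $(0,3)$ and $(0,4)$ where $4g-4+n$ vanishes and the recursion must be supplemented by the initial data above. Uniqueness of the solution is then automatic: after the exclusion $(g_i,n_i)\ne(0,3)$ and the vanishing $a_{0,1}=a_{0,2}=0$, every nonzero term on the right-hand side of~\eqref{recursionagntwo} involves an index $(g_i,n_i)$ with $g_i<g$, or $g_i=g$ and $n_i<n$, so the recursion determines every subsequent $a_{g,n}$ unambiguously.
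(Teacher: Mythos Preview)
Your derivation is correct and follows exactly the route the paper intends: the paper does not write out a proof of this corollary but simply remarks that, apart from boundary conditions, it is equivalent to equation~\eqref{deducedfrom}, and your coefficient extraction plus the transposition of the two $(g_i,n_i)=(0,3)$ terms is precisely how one passes from~\eqref{deducedfrom} (equivalently~\eqref{hgrecursion6}) to~\eqref{recursionagntwo}.

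One small slip to fix: you write that $(0,3)$ and $(0,4)$ are ``the two pairs at which $4g-4+n$ vanishes,'' but $4g-4+n=-1$ at $(g,n)=(0,3)$. The only zero in the admissible range is at $(0,4)$. The reason $(0,3)$ must be excluded is the one you already gave a sentence earlier: when $(g,n)=(0,3)$ the two ``ordered pairs'' with $(g_i,n_i)=(0,3)$ coincide, so the transposition removing $2(n-1)a_{g,n}$ is not valid there. With that wording corrected, the argument is complete.
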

The recursion relations~\eqref{recursionagn} or~\eqref{recursionagntwo} both give rapid (polynomial-time) 
algorithms for computing $a_{g,n}$ for $n\geq 2$ or $n\geq 1$, respectively. The first few values $a_{g,n}$ 
are given by the following table. 
\begin{table}[H]
\begin{center}\tiny 
    \begin{tabular} {|c|c|c|c|c|c|c|c |N}
    \hline
    & $n=0$ & $n=1$ & $n=2$ & $n=3$ & $n=4$ & $n=5$ & $n=6$ &   \\ [8pt]
    \hline
    $g=0$ &  0 & 0 & 0 & 1 & 1 &  3  & 15  & \\ [8pt]
    \hline
    $g=1$ & 0 & $\frac1{12}$ & $\frac{1}{8}$ & $\frac{11}{24}$ & $\frac{21}{8}$ & $\frac{163}{8}$ & $\frac{1595}{8}$  &  \\[8pt]
    \hline
    $g=2$ & $\frac1{96}$ & $\frac{29}{640}$ & $\frac{337}{1152}$ & $\frac{319}{128}$ & $\frac{10109}{384}$ & $\frac{42445}{128}$ & 
    $\frac{620641}{128}$  & \\[8pt]
    \hline
    $g=3$ & $\frac{575}{21504}$ & $\frac{20555}{82944} $ & $\frac{77633}{27648}$ & 
    $\frac{1038595}{27648}$ & $\frac{16011391}{27648}$ & $\frac{31040465}{3072}$ & 
    $\frac{201498115}{1024}$ &  \\[8pt]
    \hline
    $g=4$ & $\frac{2106241}{7962624}$ & $\frac{1103729}{294912}$ & $\frac{160909109}{2654208}$ & 
    $\frac{14674841399}{13271040}$ & $\frac{99177888029}{4423680}$ &
    $\frac{442442475179}{884736}$ & $\frac{10765584400823}{884736}$ &  \\[8pt]
  \hline
    \end{tabular}
\end{center}
\caption{The numbers $a_{g,n}$ with $0\leq g\leq 4$ and $0\leq n\leq 6$.} \label{table1}
\end{table}

The following proposition describes the property of ${\rm Vol} \, \Q_{g,n}$,
which will enable us to determine also $a_{g,0}$ and $a_{g,1}$ from~\eqref{theequation}, 
and $a_{g,0}$ from~\eqref{anotherequation} or~\eqref{deducedfrom}.
\begin{prop}[\cite{AEZ,ABCDGLW,CMS}] \label{prp1lem}
The following properties of the MV volumes hold:
\begin{align}
&{\rm Vol} \, \Q_{0,n}
\= \frac{\pi^{2n-6}}{2^{n-5}}\,,\qquad \forall \, n\ge 3\,;\label{thm0n}\\
&{\rm Vol} \, \Q_{1,n}
\= \frac{\pi^{2n}}3 \biggl(\frac{n! }{ (2 n-1)!!} \+ \frac{2n}{(2 n-1) 2^n}\biggr)\,, \qquad \forall \,n\ge 1\,;\label{thm1n}\\
&{\rm Vol} \, \Q_{g,n} \= 2^{2g+1+n} \frac{\pi^{6g-6+2n} (4g-4+n)!}{(6g-7+2n)!} 
\sum_{j=0}^{g}  \frac{\langle \lambda_j \tau_2^{3g-3-j} \rangle_g}{(3g-3-j)!} \biggl(\frac{5g-5-j}{2}\biggr)_n\,,\label{thmgn}
\end{align}
where $g\geq 2$, $n\geq 0$, $(b)_n:=b(b+1)\cdots (b+n-1)$ denotes the increasing Pochhammer symbol, and 
we used Witten's notation: for a cohomology class $\gamma\in H^*(\overline{\M}_{g,n};\CC)$, 
\[\langle \gamma \tau_{i_1} \cdots \tau_{i_n} \rangle_g \:=  
\int_{\overline{\mathcal M}_{g,n}} 
\gamma \, \psi_1^{i_1} \cdots \psi_n^{i_n} \,,\quad i_1,\dots,i_n\geq 0 \,. \]
\end{prop}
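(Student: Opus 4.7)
The plan is to treat the three cases $g=0$, $g=1$ and $g\ge 2$ by separate arguments, following the three cited references. The genus-zero identity \eqref{thm0n} admits a direct proof via the explicit global description of $\Q_{0,n}$: quadratic differentials on $\mathbb{P}^1$ with $n$ simple poles are essentially parametrised by the pole locations, and the Masur--Veech volume reduces to a square-tiled surface count carried out in Athreya--Eskin--Zorich~\cite{AEZ}. The genus-one identity~\eqref{thm1n} follows from the analogous but considerably more intricate count on the once-punctured torus, performed in~\cite{ABCDGLW} using modular-form techniques.

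The substantive case is \eqref{thmgn}, and I would derive it from the CMS formula~\eqref{CMSformula} by eliminating the dependence on the $n$ marked points that carry no $\psi$-class. Set
$$f_{j,n} \:= \langle \lambda_j\, \tau_0^n\, \tau_2^{3g-3+n-j}\rangle_g,$$
so that the summand in~\eqref{CMSformula}, apart from the factor $1/(3g-3+n-j)!$, is precisely $f_{j,n}$. The plan is to apply the string and dilaton equations to reduce $f_{j,n}$ to $f_{j,0}=\langle\lambda_j\,\tau_2^{3g-3-j}\rangle_g$. Since every $\tau$-insertion in $f_{j,n}$ has index $0$ or $2$, stripping one $\tau_0$ via the string equation can only lower a $\tau_2$, producing
$$f_{j,n} \= (3g-3+n-j)\,\bigl\langle \lambda_j\, \tau_0^{n-1}\, \tau_1\, \tau_2^{3g-4+n-j}\bigr\rangle_g.$$
The dilaton equation then evaluates the lone $\tau_1$ with weight $2g-2+(n-1)+(3g-4+n-j)=5g-7+2n-j$, yielding the two-term recursion $f_{j,n}=(3g-3+n-j)(5g-7+2n-j)\,f_{j,n-1}$. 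Iterating and recognising $\prod_{m=1}^n(5g-7+2m-j)=2^n\bigl(\tfrac{5g-5-j}{2}\bigr)_n$ gives
$$f_{j,n} \= \frac{(3g-3+n-j)!}{(3g-3-j)!}\cdot 2^n\,\Bigl(\tfrac{5g-5-j}{2}\Bigr)_n\cdot \langle\lambda_j\,\tau_2^{3g-3-j}\rangle_g.$$
Substituting into~\eqref{CMSformula} cancels the factor $(3g-3+n-j)!$ in the denominator and shifts the power of $2$ from $2g+1$ to $2g+1+n$, reproducing~\eqref{thmgn} on the nose.

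The one point requiring justification is the applicability of the string and dilaton equations in the presence of the Hodge class $\lambda_j$. This is standard: the Hodge bundle depends only on the underlying curve, so $\lambda_j$ is pulled back along every forgetful morphism $\overline{\mathcal M}_{g,k}\to\overline{\mathcal M}_{g,k-1}$, and by the projection formula the usual derivations carry over verbatim. The main point of care is really just bookkeeping: one should check at each step that the string operation lands on a $\tau_2$ rather than on another $\tau_0$ (so as not to produce a vanishing $\tau_{-1}$), and that the base case $\langle\lambda_j\tau_2^{3g-3-j}\rangle_g$ is a well-defined Hodge integral, which holds because $3g-3-j\ge 0$ for $0\le j\le g$ precisely when $g\ge 2$, in agreement with the hypothesis of \eqref{thmgn}.
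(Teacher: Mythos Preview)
Your string/dilaton argument for $g\ge 2$ is correct: the two-step recursion $f_{j,n}=(3g-3+n-j)(5g-7+2n-j)\,f_{j,n-1}$ is exactly right, and the bookkeeping with the Pochhammer symbol and the power of~$2$ checks out. But this is a genuinely different route from the one the paper takes. The paper does not apply string/dilaton to the CMS integrals at all; instead it first establishes Lemma~\ref{Txstructure}, which writes each $\H_g(x)$ as an explicit Laurent polynomial in $T=\sqrt{1-2x}$ using the Dubrovin--Zhang structure of the Hodge free energy (Theorem~A, the specialization~\eqref{specialvalue}, and the homogeneity conditions \eqref{homog1}--\eqref{homog2}). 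All three identities \eqref{thm0n}--\eqref{thmgn} then fall out uniformly by applying $\p_x^n|_{x=0}$ to \eqref{h0hodgeT}, \eqref{h1hodgeT}, \eqref{ggeq2hodgeT}. Your approach is more elementary for $g\ge 2$ and closer to what \cite{CMS} does, but it leaves $g=0,1$ as black-box citations; the paper's approach handles all genera on a common footing and, more importantly, yields the structural formula~\eqref{ggeq2hodgeT}, which is reused throughout Section~\ref{section3} (for Kazarian's recursion and for the computational algorithms).

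One small correction on attribution: \eqref{thm1n} was only \emph{conjectured} in~\cite{ABCDGLW}; its proof is due to~\cite{CMS}. If you keep the low-genus cases as citations, the reference for \eqref{thm1n} should be~\cite{CMS}, not~\cite{ABCDGLW}.
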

\noindent The explicit expression for ${\rm Vol} \, \Q_{0,n}$, $n\geq 3$ was conjectured 
by Kontsevich, and was proved by Athreya-Eskin-Zorich in~\cite{AEZ}. 
The formula~\eqref{thm1n} was conjecturally given by Andersen et. al.~\cite{ABCDGLW}, and
the formula~\eqref{thmgn} is equivalent to the 
Conjecture~5.4 of~\cite{ABCDGLW} (to see the equivalence, cf.~\cite{CMS}). 
A proof of Proposition~\ref{prp1lem} was given in~\cite{CMS}. 
In this paper we give a different proof of this proposition based on the following lemma. 

\begin{lemma} \label{Txstructure} 
Let $T=\sqrt{1-2x}$. Define 
the power series $\H_g(x)$, $g\geq 0$ by   
\beq\label{Hgenusexpansion}
\H(x,\e) \;=:\; \sum_{g\geq0} \e^{2g-2} \H_g(x) \,.
\eeq
Then we have
\begin{align}
\HHod_0(x) &\= 
\frac1{40}  \,-\, \frac{T^2}{12} \+ \frac{T^4}{8} \,-\, \frac{T^5}{15}\,, \label{h0hodgeT} \\
\HHod_1(x) & \= \frac1{24} \log \frac1T \+ \frac{1}{24}  \,(1-T) \,, \label{h1hodgeT} \\
\HHod_2(x) & \= \frac{7}{1440} \, \frac{1}{T^5} \+ \frac{5}{1152} \, \frac{1}{T^4} \+ \frac{7}{5760} \, \frac{1}{T^3}  \,. \label{h2hodgeT} 
\end{align}
In general, we have the following expression for $\HHod_g(x)$:
\begin{align}
\HHod_g(x) & \= 
\sum_{j=0}^{g}  \frac{\langle \lambda_j \tau_2^{3g-3-j} \rangle_g}{(3g-3-j)!} \frac{1}{T^{5g-5-j}}\,, \quad g\geq 2\,. \label{ggeq2hodgeT}
\end{align}
\end{lemma}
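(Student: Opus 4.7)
The plan is to prove formula~\eqref{ggeq2hodgeT} for all $g\ge 2$ by a direct string-and-dilaton reduction of the Hodge integrals in~\eqref{definisionagn}, and then to handle the low-genus cases $g=0$ and $g=1$ by explicit computation; the genus-two formula~\eqref{h2hodgeT} will fall out of~\eqref{ggeq2hodgeT} once three standard Hodge-integral evaluations are inserted.

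For the main step, fix $g\ge 2$ and $j\in\{0,\dots,g\}$ and set $u_n := \langle \lambda_j\, \tau_0^n\, \tau_2^{3g-3-j+n}\rangle_g$, so that $a_{g,n} \= \sum_{j=0}^g u_n/(3g-3+n-j)!$. For $n\ge 1$ I would first apply the string equation, which is valid for Hodge integrals because $\lambda_j$ is pulled back under the forgetful morphism; only the $\tau_2$'s contribute a nonzero term, producing
\[
u_n \= (3g-3-j+n)\,\bigl\langle \lambda_j\, \tau_0^{n-1}\, \tau_1\, \tau_2^{3g-4-j+n}\bigr\rangle_g \,.
\]
Next the dilaton equation strips the resulting $\tau_1$, giving the two-term recursion
\[
u_n \= (3g-3-j+n)\,(5g-7-j+2n)\,u_{n-1}\,,\qquad n\ge 1\,.
\]
For $g\ge 2$ and $0\le j\le g$ one has $3g-3-j\ge 0$ and $5g-5-j\ge 3$, so the base case $u_0 = \langle \lambda_j\,\tau_2^{3g-3-j}\rangle_g$ lives on a stable moduli space. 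Iterating gives $u_n = 2^n\,(3g-2-j)_n\,\bigl((5g-5-j)/2\bigr)_n\,u_0$, and the identity $(3g-3+n-j)! = (3g-3-j)!\,(3g-2-j)_n$ cancels the middle Pochhammer; summing the binomial series $\sum_{n\ge 0}(a)_n(2x)^n/n! = (1-2x)^{-a}$ with $a=(5g-5-j)/2$ and then summing over $j$ yields~\eqref{ggeq2hodgeT}.

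The low-genus cases require a separate argument. For $g=0$ only $j=0$ contributes, and the Witten--Kontsevich evaluation $\langle\tau_{a_1}\cdots\tau_{a_k}\rangle_0 = \binom{k-3}{a_1,\dots,a_k}$ gives $a_{0,n} = (2n-6)!/\bigl((n-3)!\,2^{n-3}\bigr)$ for $n\ge 3$; summing this elementary series (equivalently, checking that $\p_x^2 \H_0 = 1-T$ and integrating twice with $\H_0(0)=\p_x\H_0(0)=0$) yields~\eqref{h0hodgeT}. For $g=1$ both $j=0$ and $j=1$ contribute; since $\overline{\mathcal{M}}_{1,0}$ is unstable, I would run the same string--dilaton recursion starting at $n=1$ with the known values $\langle\tau_0\tau_2\rangle_1 = \langle\lambda_1\tau_0\rangle_1 = 1/24$. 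Solving gives $a_{1,n} = \bigl[2^{n-1}(n-1)! + (2n-3)!!\bigr]/24$, and the two halves recombine via $\sum_{n\ge 1}(2x)^n/(2n) = -\log T$ and $\sum_{n\ge 1}(2n-3)!!\,x^n/n! = 1 - T$ (the latter being a rescaling of the Catalan generating function $(1-T)/x$) to give~\eqref{h1hodgeT}. Finally,~\eqref{h2hodgeT} is obtained by specialising~\eqref{ggeq2hodgeT} to $g=2$ and substituting the values $\langle\tau_2^3\rangle_2 = 7/240$, $\langle\lambda_1\tau_2^2\rangle_2 = 5/576$, $\langle\lambda_2\tau_2\rangle_2 = 7/5760$.

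The main technical point demanding care is the transport of the string and dilaton equations across Hodge classes, which rests on the projection-formula identity $\pi^*\lambda_j = \lambda_j$ for the forgetful morphism; once that is in place, the rest is combinatorial bookkeeping, whose only subtlety is the need to initialise the recursion at $n=1$ rather than $n=0$ in the genus-one case, in order to avoid unstable moduli spaces.
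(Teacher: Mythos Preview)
Your argument is correct and complete. The string/dilaton recursion you set up for $u_n=\langle\lambda_j\tau_0^n\tau_2^{3g-3-j+n}\rangle_g$ is valid for all $g\ge 2$ because the relevant moduli spaces remain stable throughout ($2g-2+k=5g-6+2n-j\ge 4g-4>0$ at every step), and the resulting Pochhammer cancellation and binomial summation are clean. The separate treatment of $g=0,1$ with the adjusted base point $n=1$ in genus one is also handled correctly, and the three genus-two values you quote are the standard ones.

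Your route is genuinely different from the paper's. The paper does \emph{not} manipulate the Hodge integrals directly; instead it invokes the Dubrovin--Zhang jet-variable formalism (their Theorem~A), substituting the specialised solution $v_m(\bt^*)=(2m-3)!!/T^{2m-1}+\delta_{m,0}$ into the known closed expressions for $\H_0,\H_1$, and for $g\ge 2$ deducing the monomial form $\sum_j C_{g,j}\,T^{-(5g-5-j)}$ from the two homogeneity constraints on $H_g(z_1,\dots,z_{3g-2};s_1,\dots,s_{2g-1})$ before identifying the coefficients $C_{g,j}$ by evaluating at $x=0$. Your proof is considerably more elementary and self-contained: it needs only the string and dilaton equations together with $\pi^*\lambda_j=\lambda_j$, and it produces the Pochhammer identity behind Proposition~\ref{prp1lem} as an immediate by-product. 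The paper's approach, on the other hand, is what ties the lemma into the integrable-systems machinery (the Hodge hierarchy, the ILW reduction, the Miura transformation) that drives their proofs of Theorem~\ref{thmequation} and Corollary~\ref{Kazdeduced}; so while your argument is cleaner in isolation, theirs is the one that generalises to the rest of the paper.
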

\noindent We give in Section~\ref{section2} a proof of Lemma~\ref{Txstructure} 
by using the CMS formula~\eqref{CMSformula} and the Dubrovin-Zhang formalism~\cite{DZ-norm, DLYZ1, DuY} on  
Hodge integrals.
Substituting the expansion~\eqref{Hgenusexpansion} into~\eqref{deducedfrom} we find
\beq\label{hgrecursion6}
x \H_{g}'' \+ \Bigl(2g-\frac32\Bigr) \H_{g}' \,-\, 
\frac14 \sum_{g_1,g_2\geq 0 \atop g_1+g_2=g} \H_{g_1}'' \H_{g_2}'' \,-\, \frac1{24} \H_{g-1}'''' \=0 \,.
\eeq
Here, prime, ``\,$'$\," denotes $d/dx$. It turns out that this formula together with Lemma~\ref{Txstructure} 
determines $\H_g$, $g\geq 0$, and therefore the $a_{g,n}$, uniquely for all $g,n\geq0$.

Recently, Aggarwal, Delecroix, Goujard, Zograf and Zorich~\cite{ADGZZ} 
proposed a conjectural formula for the large~$g$ leading asymptotics 
of ${\rm Vol} \, \Q_{g,n}$ (the conjectural formula was given originally in~\cite{DGZZ} for $n=0$). 
The ADGZZ conjecture was very recently proved in~\cite{A3}.
Our next result is a refinement of the ADGZZ conjecture 
to the following more precise asymptotic statement.
\begin{conj}\label{conjasymg}
For any fixed $n\geq 0$, we have the asymptotic formula:
\beq
{\rm Vol} \, \Q_{g,n} \;\sim\; \frac{2^{12g+4n-10}}{3^{4g+n-4}\pi}  \, \sum_{k=0}^\infty \frac{m_k(n)}{g^k}\,,\quad g\to\infty\,,  \label{mvlargeg} 
\eeq
where each $m_k(n)$ is a polynomial in~$n$ with coefficients in $\QQ[\pi^2]$, the first four values 
(with $M=-\pi^2/144$  for convenience)
given by 
\begin{align}
m_0(n) &\= 1\,, \quad m_1 (n) \= M\,, \nn\\
 m_2 (n) & \= \frac M{24}\,n^3\,-\,\frac{3M}8\,n^2 \+\frac{4M-17M^2}{6}\,n \+\frac{M+19M^2}2\,,  \nn \\ 
 m_3 (n) &\= - \frac{8M+27M^2}{288}n^4 \+ \frac{17M+65M^2}{48} n^3 
  \,-\,\frac{860M+1890M^2-14256M^3}{576} n^2 \nn \\  
 &\qquad \+ \frac{104M \,-\, 373M^2 \,-\, 6156 M^3 }{48}  n 
 \nn\\
 &\qquad \, - \, \frac{ 55M - 3615M^2 - 28650M^3 + 126846M^4 }{180} \,.\nn 
\end{align}  
\end{conj}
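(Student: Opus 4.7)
The plan is to recast Conjecture~\ref{conjasymg} as a statement about the large-$g$ asymptotics of a single Taylor series, namely $\HHod_g(x)$ expanded at $x=0$, and then to extract that asymptotics from the nonlinear recursion~\eqref{hgrecursion6}. The starting observation is that because $T(0)=1$ and $\p_x T^{-k}=k\,T^{-k-2}$, induction gives $\p_x^n T^{-k}|_{x=0}=2^n(k/2)_n$. Substituting this into Lemma~\ref{Txstructure} yields, for $g\ge 2$,
\beq\label{planA}
\p_x^n \HHod_g(x)\big|_{x=0} \= 2^n \sum_{j=0}^g \frac{\langle \lambda_j \tau_2^{3g-3-j}\rangle_g}{(3g-3-j)!}\,\Bigl(\frac{5g-5-j}{2}\Bigr)_n.
\eeq
Comparing \eqref{planA} with formula~\eqref{thmgn} of Proposition~\ref{prp1lem} collapses it to the clean identity
\beq\label{planB}
\MV_{g,n} \= 2^{2g+1}\,\frac{\pi^{6g-6+2n}(4g-4+n)!}{(6g-7+2n)!}\; \p_x^n\HHod_g(0),
\eeq
so the conjecture becomes an assertion about the all-orders $1/g$ expansion of $\p_x^n\HHod_g(0)$. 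Stirling's formula applied to $(4g-4+n)!/(6g-7+2n)!$ gives an asymptotic series in $1/g$ whose coefficients are polynomials in $n$, and accounts (after combination with the $\pi^{6g-6+2n}$ and with the WKB constants from the next step) for the global factor $2^{12g+4n-10}/(3^{4g+n-4}\pi)$ in~\eqref{mvlargeg}.

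The main step is an all-orders WKB analysis of $\HHod_g(x)$ as $g\to\infty$ for $x$ in a neighborhood of $0$. I would posit an ansatz
$$\HHod_g(x) \;\bsim\; F(x)\,G(x)^{g}\,g^{\nu}\,\Bigl(1 + \sum_{k\ge 1}\frac{B_k(x)}{g^k}\Bigr),\qquad g\to\infty,$$
and substitute it into the recursion~\eqref{hgrecursion6}. The dominant contribution of the convolution $\sum_{g_1+g_2=g}\HHod_{g_1}''\HHod_{g_2}''$ comes from the endpoints $g_1=O(1)$, and combines with the drift term $(2g-3/2)\HHod_g'$ to produce an algebraic-differential equation that pins down $G(x)$; the value $G(0)$ must coincide with the constant coming from the ADGZZ theorem of~\cite{A3}, which gives the leading $k=0$ term $m_0(n)=1$. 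Each subsequent order in $1/g$ yields a \emph{linear} inhomogeneous ODE for the next coefficient $B_k$, whose coefficients and inhomogeneities are expressible in terms of $F$, $G$, $\nu$ and $B_0,\ldots,B_{k-1}$; these are integrated near $x=0$ with integration constants fixed by known exact data from Lemma~\ref{Txstructure} and its generalization to higher $g$. Finally, differentiating $n$ times at $x=0$ and noting that $\p_x^n[F(x)G(x)^g B_k(x)]|_{x=0}$ is a polynomial in $g$ of degree $\le n$ with $\QQ[\pi^2]$ coefficients (via Fa\`a di Bruno applied to $G(x)^g=e^{g\log G(x)}$), one obtains the desired expansion in which the coefficient of $g^{-k}$ is itself a polynomial in $n$. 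Matching against $m_0,\ldots,m_3$ in the statement both fixes the remaining integration constants and provides a strong consistency check.

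The genuine difficulty is the WKB step. Its leading order is already the ADGZZ theorem of~\cite{A3}, whose proof is intricate; extracting an \emph{all-orders} asymptotic expansion from a nonlinear recursion such as~\eqref{hgrecursion6} typically requires resurgent or trans-series methods, and one has furthermore to establish analyticity of each $B_k(x)$ at $x=0$ in order to convert the polynomial-in-$g$ data delivered by \eqref{planA} into polynomial-in-$n$ coefficients $m_k(n)$. The alternate equation~\eqref{anotherequation}, which is linear in $\p_\e$, may offer a more tractable route, and the Dubrovin--Zhang normal-form machinery invoked in the proof of Lemma~\ref{Txstructure} may provide the structural framework for controlling the expansion to all orders.
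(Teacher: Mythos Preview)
The statement you are trying to prove is labeled a \emph{Conjecture} in the paper, and the paper does not prove it. Immediately after stating it the authors write that ``Conjecture~\ref{conjasymg}, like the related Conjecture~\ref{conjecture2} \ldots, is completely empirical,'' and they describe exactly what they did: compute $a_{g,n}$ for $g\le 100$ and several small~$n$ using the fast recursions of Corollaries~\ref{thmrecursion}, apply the numerical extrapolation method of~\cite{Zagier,GM} to guess a $1/g$-series to high precision for each fixed~$n$, and then use polynomial interpolation in~$n$ together with LLL to recognize the coefficients as elements of~$\QQ[\pi^2]$. There is no analytic argument in the paper, so there is nothing to compare your proposal against.

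As for your proposal itself: the reduction step is correct and is already in the paper. Your identity~\eqref{planB} is exactly Lemma~\ref{twofreeenergiesequal}, and your~\eqref{planA} is the content of~\eqref{thmgn}. So everything up to ``the main step is an all-orders WKB analysis'' is a faithful restatement of what is already established. The substance of the conjecture lies entirely in that WKB step, and here your proposal is not a proof but a heuristic plan, as you yourself acknowledge in the final paragraph. Concretely: you have not shown that the ansatz $\HHod_g(x)\sim F(x)G(x)^g g^\nu(1+\sum B_k(x)/g^k)$ is valid; you have not identified $G(x)$, $F(x)$, or~$\nu$; you have not verified that the convolution $\sum_{g_1+g_2=g}\HHod_{g_1}''\HHod_{g_2}''$ is in fact dominated by the endpoints (this requires control of the full sequence $\HHod_g$, not just its leading asymptotics); and you have not shown that the resulting linear ODEs for the~$B_k$ have solutions analytic at $x=0$. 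Any one of these would already go beyond the proved ADGZZ leading term from~\cite{A3}. Your outline is a reasonable description of what a proof \emph{might} look like, but as written it establishes nothing beyond what the paper already contains, and the paper's authors chose to leave the statement as a conjecture precisely because these steps are open.
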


The asymptotic formula~\eqref{mvlargeg} with $\sum_{k=0}^\infty m_k(n)/g^k$ 
replaced by~$1$ is the ADGZZ conjecture. We refer to~\cite{A1,A2,CMSZ,CMZ,EZ,S2,S3} for 
the analogues of the ADGZZ conjecture and Conjecture~\ref{conjasymg} (cf. also Conjecture~\ref{conjecture2} 
in Section~\ref{section4} below) for the MV volumes and for the related area Siegel--Veech constants associated with 
the moduli spaces of abelian differentials, and the proofs of these analogues via different approaches.
Conjecture~\ref{conjasymg} can also be stated in terms of the numbers $a_{g,n}$ defined in~\eqref{definisionagn} as
\beq a_{g,n} \;\sim\; \frac{(6g-7+2n)!}{(4g-4+n)! }\frac{2^{10g+4n-11}}{3^{4g+n-4}\pi^{6g-5+2n}} \, 
  \sum_{k=0}^\infty \frac{m_k(n)}{g^k} \,, \quad g\rightarrow \infty\,.
\eeq

Conjecture~\ref{conjasymg}, like the related Conjecture~\ref{conjecture2} which will be stated in Section~\ref{section4} below, 
is completely empirical. Specifically, we computed the 
values of $a_{g,n}$ numerically for $g\le 100$ and a number of small values of~$n$, then interpolated by the 
numerical method explained in~\cite{Zagier}, \cite[Section~5]{GM}
and elsewhere to get an asymptotic power series in~$1/g$ with coefficients 
known to high precision, and then used polynomial interpolation and the LLL (Lenstra--Lenstra--Lovasz) method to recognize the coefficients 
as polynomials in~$n$ with coefficients in $\QQ[\pi^2]$.

\begin{remark}
It would be interesting to investigate the following generating series:
\beq
C_n(\e)\:= \sum_{g\geq 0} \e^{2g-2} a_{g,n}\,, \quad n\geq 0\,.
\eeq
In other words, $\H(x,\e)=\sum_{n\geq 0}\frac{x^n}{n!} C_n(\e)$. 
Equation~\eqref{deducedfrom} then implies the following relations for $C_n(\e)$:
\beq
C_{n+4} \= \frac{24}\e C_{n+1}' \+ 12\frac{2n+1}{\e^2} C_{n+1} \,-\, 6n! \sum_{n_1+n_2=n} \frac{C_{n_1+2} C_{n_2+2}}{n_1! \,n_2!} \,, \quad n\geq 0\,. 
\eeq
Similarly, equation~\eqref{anotherequation} implies relations for the analogue of~$C_n(\e)$ for $\H_+-\H_-$.
Understanding of $C_n(\e)$ or its analogue might be useful for proving the above Conjecture~\ref{conjasymg}. 
\end{remark}

The paper is organized as follows. 
In Section~\ref{section2}, we review the Dubrovin-Zhang theory and give a proof of Lemma~\ref{Txstructure}.
In Section~\ref{section3}, we prove Theorem~\ref{thmequation}.
In Section~\ref{section4}, we extend a conjectural formula for the large genus asymptotics of the 
area Siegel--Veech constants.

\medskip

\noindent {\bf Acknowledgements.} 
We would like to thank Dawei Chen, Martin M\"oller, and Motohico Mulase
for helpful suggestions. Part of the work of D.Y. was done during his visit in MPIM; 
he thanks MPIM for excellent working conditions and financial supports. 
This work was partially supported by NSFC No.\,11771238.

\section{The Hodge free energy}\label{section2}
In this section we first give a short review of the Dubrovin-Zhang approach to Hodge integrals~\cite{DLYZ1,Buryak,DZ-norm,DuY}, 
and then specialize our discussions to linear Hodge integrals 
and prove Lemma~\ref{Txstructure}.
Recall that the genus~$g$ Hodge free energy $\HHod_g(\bt;\bs)$ is defined by
\begin{align}
&\HHod_g(\bt;\bs) \= \sum_{k\geq 0} \sum_{\,i_1, \dots, i_k\geq 0} \frac{t_{i_1}\cdots t_{i_k}}{k!} \int_{\overline{\mathcal M}_{g,k}} 
 \Omega_{g,k}(\bs) \, \psi_1^{i_1} \cdots \psi_k^{i_k}\,,\\
&\Omega_{g,k}(\bs) \:= \exp \biggl(\,\sum_{j\geq 0} s_{2j-1} \, {\rm ch}_{2j-1}(\mathbb{E}_{g,k})\biggr) \,.
\end{align}
Here $g\geq 0$, $\bt = (t_0,t_1,\dots)$, $\bs=(s_1,s_3,\dots)$, 
$t_0,t_1,t_2,\dots$, $s_1,s_3,\dots$ are indeterminates, 
and ${\rm ch}_1, {\rm ch}_3, {\rm ch}_5$, $\dots$ denote components of the Chern character of~$\mathbb{E}_{g,k}$. 
Define the total Hodge free energy $\HHod$ by
$$
 \HHod\=\HHod(\bt;\bs;\epsilon) \= \sum_{g\geq0} \HHod_g(\bt;\bs)\,\epsilon^{2g-2} \,.
$$

Let $\vv \in \CC[[\bt]]$ be the unique power series solution to the following equation:
\beq\label{hodo}
\sum_{i\geq 0} \frac{t_i}{i!} v^i \= v \,. 
\eeq
It is well known that this unique power series $\vv=\vv(\bt)$ has the explicit expression
\beq\label{genus0solution}
\vv(\bt) \=  
\sum_{k\geq 1} \frac1k \sum_{p_1,\dots,p_k\geq 0\atop p_1+\dots+p_k=k-1} \frac{t_{p_1}}{p_1!} \cdots \frac{t_{p_k}}{p_k!}\;.
\eeq
Denote 
\beq
v_m(\bt) \= \p_{t_0}^m(v(\bt))\,, \quad m\geq 0\,. 
\eeq

\medskip

\noindent {\bf Theorem A} (\cite{DLYZ1})  
\textit{The genus~0 and~1 Hodge free energies have the expressions
\begin{align}
& \HHod_0(\bt; \bs) \= 
\frac{\vv(\bt)^3}6 \m \sum_{i\geq 0} t_i  \frac{\vv(\bt)^{i+2}}{i! (i+2)}  
\+ \frac12 \sum_{i,j\geq 0} t_i t_j \frac{\vv(\bt)^{i+j+1}}{(i+j+1) i! j!}\,,\label{Hodgegenus0}\\
& \HHod_1(\bt; \bs) \= \frac1{24} \log\vv_1(\bt) \+ \frac{s_1}{24} \vv(\bt) \,.\label{Hodgegenus1}
\end{align}
For $g\geq 2,$ there exist elements 
\[\Hod_g(z_1,\dots,z_{3g-2}; s_1,s_3,\dots,s_{2g-1})\in \CC\bigl[z_1,\dots,z_{3g-2},z_1^{-1}; s_1,s_3,\dots,s_{2g-1}\bigr]\]
satisfing the conditions
\begin{align} 
& \sum_{m=1}^{3g-2} m \, z_m \, \frac{\p \Hod_g}{\p z_m} \= (2g-2)\,\Hod_g \, , \label{homog1}\\
& \sum_{m=2}^{3g-2} (m-1) \, z_m \, \frac{\p \Hod_g}{\p z_m} \+ \sum_{j=1}^g (2j-1) \, s_{2j-1} \, \frac{\p \Hod_g}{\p s_{2j-1}} 
  \= (3g-3)\, \Hod_g \, , \label{homog2}
\end{align}
such that 
\begin{equation}
\HHod_g(\bt; \bs) \= \Hod_g \bigl( \vv_1(\bt),\dots,\vv_{3g-2}(\bt);s_1,s_3,\dots,s_{2g-1}\bigr) \,. \label{generalhodge}
\end{equation}}

\noindent This theorem was proved in~\cite{DLYZ1}; see also~\cite{DuY} for a straightforward proof.

Define 
\beq\label{defineu}
u \= u(\bt;\bs;\e) \:= \e^2 \frac{\p^2 \H(\bt;\bs;\e)}{\p t_0^2} \,,
\eeq
then according to~\cite{DLYZ1}, $u$ satisfies an integrable hierarchy of tau-symmetric Hamiltonian evolutionary 
PDEs, called the Hodge hierarchy,  
which is a deformation of the KdV hierarchy~\cite{Witten,Kontsevich} and has the form
\beq
\frac{\p u}{\p t_k} \= P \frac{\delta \bar h_k}{\delta u(x)} \,, \quad k\geq 0.
\eeq
Here $P=\p_x \+ \cdots$ is a Hamiltonian operator, $\bar h_k$, $k\geq 0$ are Hamiltonians.

In~\cite{DYZ} Theorem~A was applied under a particular specialization of $\bt,\bs$, which  
gives the classical Hurwitz numbers according to the ELSV formula. In this paper, we consider a {\it different} specialization.
Firstly, we specialize~${\bf s}$ to ${\bf s}={\bf s}^*$ as follows:
\beq
s_{2k-1}^* \:= (2k-2)! \, s^{2k-1}, \quad k\geq 1 \,.
\eeq
Denote by $\Lambda_{g,k}(s):=\sum_{j=0}^g \lambda_j \, s^j$ the Chern polynomial of~$\mathbb{E}_{g,k}$. Applying the relationship between 
the Chern classes and the Chern character, and 
using Mumford's relations~\cite{Mumford} \[{\rm ch}_{2m}(\mathbb{E}_{g,k})\=0\,, \quad m\geq 1\,,\] 
we obtain $\Omega_{g,k}(\bs=\bs^*) =  \Lambda_{g,k}(s)$. So we have 
\beq\label{linearhodgepotential}
\HHod_g(\bt;\bs^*) \= \sum_{n\geq 0} \sum_{\,i_1, \dots, i_n\geq 0} \frac{t_{i_1}\cdots t_{i_n}}{n!} \int_{\overline{\mathcal M}_{g,n}} 
 \Lambda_{g,n}(s) \, \psi_1^{i_1} \cdots \psi_n^{i_n} \,.
\eeq
Secondly, we specialize $\bt$ to $\bt=\bt^*$ given by
\beq\label{specialvalue}
t_0^* \= x \,, \quad t_1^*  \= 0 \,,  \quad t_2^* \= 1\,, \quad t_i^* \= 0~(i\geq 3) \,.
\eeq
Substituting~\eqref{specialvalue} into~\eqref{linearhodgepotential} we arrive at
\beq\label{Hodgextj}
\HHod_g(\bt^*;\bs^*) \=   \sum_{n_0\geq0} 
\frac{x^{n_0}}{n_0!} \, \sum_{j=0}^g s^j
 \frac{\langle \lambda_j  \, \tau_0^{n_0} \, \tau_2^{3g-3+n_0-j} \rangle_g}{ (3g-3+n_0-j)!} \,.
\eeq
From the definition of~$a_{g,n}$ given in~\eqref{definisionagn}, it follows that 
the MV free energy is a specialized linear Hodge free energy. More precisely, we have the following lemma.
\begin{lemma} \label{twofreeenergiesequal}
For any $g\geq 0$, the following identities hold:
\beq
\H_g(x) \= \HHod_g(\bt^*;\bs^*)|_{s=1} \,,
\eeq
where $\H_g(x)$ is the $g^{\rm th}$ part of the MV free energy~\eqref{Hgenusexpansion}.
Equivalently, we have
\beq\label{MVhodge}
{\rm Vol} \, \Q_{g,n} 
\= 2^{2g+1} \frac{\pi^{6g-6+2n} (4g-4+n)!}{(6g-7+2n)!} \,\p_x^n\bigl(\HHod_g(\bt^*;\bs^*)\bigr)\big|_{x=0,s=1} \,.
\eeq

\end{lemma}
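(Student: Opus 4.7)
The plan is to establish the lemma by direct coefficient matching between the two power series, exploiting the fact that the paragraphs preceding the statement have already done essentially all the work. Specifically, equations~\eqref{linearhodgepotential} and~\eqref{Hodgextj} were obtained by specializing the Hodge free energy first in $\bs$ via Mumford's relations ${\rm ch}_{2m}(\mathbb{E}_{g,n})=0$, so that $\Omega_{g,n}(\bs^*)$ collapses to the Chern polynomial $\Lambda_{g,n}(s)=\sum_j \lambda_j s^j$; and then in $\bt$ by keeping only $t_0^*=x$ and $t_2^*=1$, with the combinatorics of choosing which marked points carry a $\psi^0$ versus a $\psi^2$ factor absorbed into the multinomial coefficient $n!/(n_0!\,n_2!)$ and the dimension constraint $n_2=3g-3+n_0-j$ selecting a unique value of $n_2$ for each $(j,n_0)$.

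Setting $s=1$ in~\eqref{Hodgextj} and translating Witten's symbol via
$$\langle \lambda_j\,\tau_0^{n}\,\tau_2^{3g-3+n-j}\rangle_g \= \int_{\overline{\M}_{g,3g-3+2n-j}}\lambda_j\,\psi_{n+1}^2\cdots\psi_{3g-3+2n-j}^2,$$
the coefficient of $x^n/n!$ in $\HHod_g(\bt^*;\bs^*)|_{s=1}$ becomes exactly the right-hand side of the definition~\eqref{definisionagn} of $a_{g,n}$. Comparing with $\H_g(x)=\sum_{n\geq 0}\frac{x^n}{n!}\,a_{g,n}$, which follows from combining~\eqref{defintionofHxe} with the genus expansion~\eqref{Hgenusexpansion}, yields the desired identity $\H_g(x)=\HHod_g(\bt^*;\bs^*)|_{s=1}$. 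The equivalent formula~\eqref{MVhodge} then drops out upon applying $\p_x^n|_{x=0}$ to recover $a_{g,n}$ and rewriting it in terms of ${\rm Vol}\,\Q_{g,n}$ using the CMS formula~\eqref{CMSformula} with its prefactor $2^{2g+1}\pi^{6g-6+2n}(4g-4+n)!/(6g-7+2n)!$.

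One step that deserves explicit verification is the unstable range $2g-2+n\leq 0$ — namely the cases $(g,n)\in\{(0,0),(0,1),(0,2),(1,0)\}$ — where $a_{g,n}=0$ by the convention in~\eqref{definisionagn}. The right-hand side also vanishes there, because for these values of $(g,n_0)$ there is no $j\in\{0,\ldots,g\}$ for which the summand in~\eqref{Hodgextj} makes sense: either $3g-3+n_0-j<0$, or the putative moduli space $\overline{\M}_{g,3g-3+2n_0-j}$ is unstable. There is no substantive obstacle in the proof; it is entirely bookkeeping once the Dubrovin--Zhang specializations producing~\eqref{linearhodgepotential} and~\eqref{Hodgextj} have been set up, and the main care needed is in checking that the combinatorial factors from expanding $(t_0^*)^{n_0}(t_2^*)^{n_2}$ exactly cancel against the Witten symmetrization to reproduce the $1/(3g-3+n-j)!$ appearing in~\eqref{definisionagn}.
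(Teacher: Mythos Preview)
Your proof is correct and follows exactly the route the paper intends: the paper states the lemma as an immediate consequence of the preceding computation~\eqref{Hodgextj} (``From the definition of~$a_{g,n}$ given in~\eqref{definisionagn}, it follows that\ldots'') without a separate proof environment, and you have simply written out the coefficient comparison and the unstable-range check that justify this. There is no alternative argument to compare.
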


Let us now apply Theorem~A to the computation of~$\HHod_g(\bt^*;\bs^*)$, which, due to~\eqref{MVhodge}, 
gives rise to~${\rm Vol} \, \Q_{g,n}$.
Substituting~\eqref{specialvalue} into~\eqref{hodo} we find that $v=v(\bt^*)$ satisfies the following quadratic equation
\beq
x \+ \frac{v^2}2  \= v\,.
\eeq
By solving this and observing that the power series~$v$ starts with~$x$, 
we obtain 
\[v(\bt^*)= 1-\sqrt{1-2 x}\,.\]
Denote 
\beq
T \:= \sqrt{1-2x} \, .
\eeq
Then by noticing $ \p_x \= -\frac1T \p_T  $
we find 
\beq\label{vmam}
v_m (\bt^*) \=  \frac{(2m-3)!!}{T^{2m-1}} \+ \delta_{m,0} \,, \quad m\geq 0\,.
\eeq

\begin{lemma} \label{jetsstructure} 
The power series $\HHod_g(\bt^*;\bs^*)$ of $x,t$ 
are given explicitly for $g=0,1,2$ by
\begin{align}
\HHod_0(\bt^*;\bs^*) &\= 
\frac1{40}  \,-\, \frac{T^2}{12} \+ \frac{T^4}{8} \,-\, \frac{T^5}{15} \,, \label{h0hodges} \\
\HHod_1(\bt^*;\bs^*) & \= \frac1{24} \log \frac1T \+ \frac{s}{24}  \,(1-T) \,, \label{h1hodges} \\
\HHod_2(\bt^*;\bs^*) & \= \frac{7}{1440} \, \frac{1}{T^5} \+ \frac{5}{1152} \, \frac{s}{T^4} \+ \frac{7}{5760} \, \frac{s^2}{T^3}  \,. \label{h2hodges}
\end{align}
In general, for $g\geq 2$, $\HHod_g(\bt^*;\bs^*)$ has the following expression:
\beq\label{ggeq2hodges}
\HHod_g(\bt^*;\bs^*) \= 
\sum_{j=0}^{g}  \frac{\langle \lambda_j \tau_2^{3g-3-j} \rangle_g}{(3g-3-j)!} \frac{s^j}{T^{5g-5-j}}\,, \quad g\geq 2\,.
\eeq
\end{lemma}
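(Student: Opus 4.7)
The plan is to combine Theorem~A with the specialization formulas~\eqref{specialvalue}, \eqref{vmam} together with the definition of $\bs^*$, and then exploit the two weighted-homogeneity identities~\eqref{homog1}--\eqref{homog2} to read off the structure of $\HHod_g(\bt^*;\bs^*)$.

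For $g=0$ and $g=1$ the proof is a direct substitution. Plugging $t_0^*=x$, $t_2^*=1$ (other $t_i^*=0$) into~\eqref{Hodgegenus0} leaves only six nonzero terms, which, after using $v(\bt^*)=1-T$ and $x=(1-T^2)/2$, collapse by expansion in~$T$ to~\eqref{h0hodges}. Analogously, inserting $v_1(\bt^*)=1/T$ and $s_1^*=s$ into~\eqref{Hodgegenus1} yields~\eqref{h1hodges} on sight.

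For $g\ge 2$ the representation~\eqref{generalhodge} presents $\HHod_g(\bt^*;\bs^*)$ as the specialization of a Laurent polynomial $\Hod_g(z_1,\dots,z_{3g-2};s_1,s_3,\dots,s_{2g-1})$. For a generic monomial $z_1^{a_1}\cdots z_{3g-2}^{a_{3g-2}}\, s_1^{b_1}s_3^{b_3}\cdots s_{2g-1}^{b_{2g-1}}$, set $B:=\sum_k(2k-1)\,b_{2k-1}$. The identities~\eqref{homog1}--\eqref{homog2} read $\sum_m m\,a_m=2g-2$ and $\sum_m(m-1)a_m+B=3g-3$; subtracting gives $\sum_m a_m=B-(g-1)$, whence $\sum_m(2m-1)a_m=5g-5-B$. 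Now~\eqref{vmam} specializes each $v_m|_{\bt^*}$ to a constant multiple of $T^{-(2m-1)}$, and $s^*_{2k-1}=(2k-2)!\,s^{2k-1}$, so the monomial contributes a rational multiple of $s^{B}/T^{5g-5-B}$. Writing $j:=B$ and recalling from~\eqref{linearhodgepotential} that $\Omega_{g,n}(\bs^*)=\Lambda_{g,n}(s)=\sum_{j=0}^g \lambda_j s^j$ has $s$-degree at most $g$, we conclude $\HHod_g(\bt^*;\bs^*)=\sum_{j=0}^{g} c_j(g)\,s^{j}/T^{\,5g-5-j}$ for certain constants $c_j(g)$.

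To pin down the $c_j(g)$, evaluate at $x=0$ (so $T=1$): one side becomes $\sum_j c_j(g)\,s^j$, while~\eqref{Hodgextj} at $n_0=0$ gives $\sum_{j=0}^g s^j\,\langle\lambda_j\tau_2^{3g-3-j}\rangle_g/(3g-3-j)!$. Matching coefficients yields~\eqref{ggeq2hodges}, and~\eqref{h2hodges} is just the $g=2$ case after inserting the three Hodge integrals $\langle\tau_2^3\rangle_2$, $\langle\lambda_1\tau_2^2\rangle_2$, $\langle\lambda_2\tau_2\rangle_2$, which are standard. The one point requiring attention is that $\Hod_g$ is Laurent in $z_1$, so $a_1$ may be negative; however the homogeneity computation imposes no sign restriction on $a_1$, and the forced $T$-exponent $-(5g-5-j)$ is unaffected. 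Beyond this bookkeeping I do not anticipate a genuine obstacle.
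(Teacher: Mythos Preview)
Your argument is correct and follows essentially the same route as the paper: use~\eqref{Hodgegenus0}--\eqref{Hodgegenus1} for $g=0,1$, then for $g\ge2$ combine~\eqref{generalhodge}, \eqref{vmam} with the homogeneities~\eqref{homog1}--\eqref{homog2} to obtain the form $\sum_{j} C_{g,j}\,s^j/T^{5g-5-j}$, and finally identify the $C_{g,j}$ by setting $x=0$ and comparing with~\eqref{Hodgextj}. Your monomial-by-monomial computation of the $T$-exponent makes explicit what the paper only asserts, and your handling of the Laurent variable $z_1$ is correct. The one small difference is that the paper obtains the $g=2$ formula~\eqref{h2hodges} directly from the algorithm of~\cite{DLYZ1} for $H_2$, whereas you recover it as the $g=2$ instance of~\eqref{ggeq2hodges} together with the known values of the three Hodge integrals; both routes are valid.
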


\begin{proof}
By substituting~\eqref{vmam} into~\eqref{Hodgegenus0} and \eqref{Hodgegenus1}, we arrive at the
formulas for $\HHod_0(\bt^*;\bs^*)$ and $\HHod_1(\bt^*;\bs^*)$, respectively.
The formula for $\HHod_2(\bt^*;\bs^*)$ can be obtained by using the algorithm of~\cite{DLYZ1} with $v_m(\bt^*)$ given by~\eqref{vmam}.
To show the validity of the formula for $\HHod_g(\bt^*;\bs^*)$, $g\geq 2$, we first observe that, according to \eqref{generalhodge}, \eqref{vmam} 
and the homogeneity conditions \eqref{homog1}, \eqref{homog2}, the function $\HHod_g(\bt^*;\bs^*)$ 
can be written in the form
\beq\label{ggeq2hodgeform}
\HHod_g(\bt^*;\bs^*) \= \sum_{j=0}^{g}  \frac{C_{g,j} s^j}{T^{5g-5-j}} \,, \quad g\geq 2\,,
\eeq
where $C_{g,j}\in\QQ$.
Therefore,
\[
\HHod_g(\bt^*;\bs^*)|_{x=0} \= \sum_{j=0}^{g}  C_{g,j} s^j \, , \quad g\geq 2 \,.
\]
On the other hand, it follows from~\eqref{Hodgextj} that 
\[\HHod_g(\bt^*;\bs^*)|_{x=0} \= \sum_{j=0}^g  
\frac{\langle \lambda_j \tau_2^{3g-3-j} \rangle_g}{(3g-3-j)!} s^j \,.\]
By comparing the coefficients of $s^j$ in the two formulas given above we arrive at
\beq
C_{g,j} \= \frac{\langle \lambda_j \tau_2^{3g-3-j} \rangle_g}{(3g-3-j)!}\,,\quad j=0,\dots,g\,,
\eeq
where $g\ge 2$. The lemma is proved.
\end{proof}

\smallskip

\begin{proof}[Proof of Lemma~\ref{Txstructure}]
By putting $s=1$ in Lemma~\ref{jetsstructure}, we arrive at the result of Lemma~\ref{Txstructure}.
\end{proof}

Now let us give a proof of Proposition~\ref{prp1lem} based on Lemma~\ref{Txstructure}.

\begin{proof}[Proof of Proposition~\ref{prp1lem}]
By using~\eqref{h0hodgeT} and the fact that $\frac{d}{dx} = -\frac1T \frac{d}{dT}$ we have
\begin{align}
& \HHod_0'(x) \= \frac16 - \frac{T^2}2  + \frac{T^3}3 \,, 
\quad \HHod_0''(x) \= v(\bt^*) \,,\\
& \frac{d^n\HHod_0(x)}{dx^n}  \= v_{n-2}(\bt^*) \=  \frac{ (2n-7)!!}{T^{2n-5}} \,, \quad n\geq 3\,.
\end{align}
Therefore,
$
\frac{d^n \HHod_0(x) }{dx^n}\big|_{x=0} = 
(2n-7)!! \, \delta_{n\geq 3}$.
Due to the definition~\eqref{defintionofHxe} and the CMS formula this gives~\eqref{thm0n}. 
Similarly, by using~\eqref{h1hodgeT} we obtain
\beq
\frac{d^n\HHod_1(x)}{dx^n} \= 
\frac{\delta_{n\geq1}}{24} \frac{2^{n-1}(n-1)!}{T^{2n}} \+\frac{\delta_{n,0}}{24} \log \frac1T \+ \frac{1}{24} \frac{(2n-3)!! }{T^{2n-1}} \+ \frac{\delta_{n,0}}{24} \,,
\eeq
from which we arrive at~\eqref{thm1n}.
Finally, by using~\eqref{ggeq2hodgeT} we have for $g\geq 2$,
\beq
\frac{d^n\HHod_g(x)}{dx^n} \= 
\sum_{j=0}^{g}\frac{\langle \lambda_j \tau_2^{3g-3-j} \rangle_g}{(3g-3-j)!} \frac{\prod_{i=0}^{n-1}(5g-5-j+2i)}{T^{5g-5-j+2n}} \, ,
\eeq
which yields formula~\eqref{thmgn}. 
Proposition~\ref{prp1lem} is proved. 
\end{proof}

\begin{remark}
The explicit expressions of the numbers 
$\langle \lambda_g \tau_2^{2g-3} \rangle_g$ that appear in~\eqref{thmgn} of Proposition~\ref{prp1lem} are given by the following
$\lambda_g$-conjecture proven in~\cite{FP,ELSV}:
\beq
\frac{\langle \lambda_g \tau_2^{2g-3} \rangle_g}{(2g-3)!} \= 
\frac{2^{2g-1}-1}{2^{2g-1}} (4g-7)!! \, \frac{|B_{2g}|}{(2g)!} \,, \qquad g\geq 2\,,
\eeq
where $B_k$ denotes the $k_{\rm th}$-Bernoulli number. The number 
$\langle \tau_2^{3g-3} \rangle_g$ for $g\geq 2$
has the expression~\cite{IZ}:
\beq\label{IZformula}
\frac{\langle \tau_2^{3g-3} \rangle_g}{(3g-3)!} \= \frac{24^{-g} c_g}{(5g-3)(5g-5)} \,,
\eeq
where  $c_g$ are given by the recursion
\beq\label{cgdef}
c_g \=  50\,(g-1)^2\,c_{g-1} + \frac12 \sum_{h=2}^{g-2} c_h \, c_{g-h},\quad  g\geq 3
\eeq
together with 
$c_0=-1, c_1=2, c_2=98$.
\end{remark}

Proposition~\ref{prp1lem} and formula~\eqref{IZformula} imply immediately 
the following corollary.
\begin{cor}\label{cor3}
For any fixed $g\geq 0$, the following asymptotic formula is true:
\beq\label{largen}
{\rm Vol} \, \Q_{g,n} \;\sim\;  \kappa_g \, \frac{n^{\frac{g}2} \pi^{2n}}{2^n} \quad (n\rightarrow \infty) \,,
\eeq
where 
\beq\label{kappag}
 \kappa_g \= \frac{64\,\pi^{6g-\frac{11}2}}{384^g\,\Gamma(\frac{5g-1}2)} \, c_g \,,  
\eeq
and $c_g$ are defined by \eqref{cgdef}.
\end{cor}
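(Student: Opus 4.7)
The corollary is claimed to follow ``immediately'' from Proposition~\ref{prp1lem} and formula~\eqref{IZformula}, so the plan is essentially to run a direct Stirling asymptotic analysis on the exact finite-sum expression~\eqref{thmgn} for $g\geq 2$, then handle $g=0,1$ separately. For $g\geq 2$ the exact formula expresses ${\rm Vol}\,\Q_{g,n}$ as a sum over $j=0,\dots,g$ whose only $n$-dependence sits in $\pi^{2n}$, the ratio $(4g-4+n)!/(6g-7+2n)!$, and the Pochhammer symbols
\[\biggl(\frac{5g-5-j}{2}\biggr)_n \= \frac{\Gamma(n+(5g-5-j)/2)}{\Gamma((5g-5-j)/2)} \,.\]
Since $\Gamma(n+\alpha)\sim \Gamma(n)\,n^{\alpha}$, each successive value of $j$ costs a factor $n^{-1/2}$ relative to the previous one, so the leading asymptotics is supplied entirely by the $j=0$ summand and the $j\geq 1$ terms are uniformly $o(n^{g/2})$.

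Next I would evaluate the $n$-dependent prefactor of this dominant term via Stirling. The cleanest route is to apply the Legendre duplication formula $\Gamma(2z) = \pi^{-1/2}\,2^{2z-1}\,\Gamma(z)\,\Gamma(z+\tfrac12)$ with $2z=2n+6g-6$, rewriting $(6g-7+2n)!$ as $\pi^{-1/2}\,2^{2n+6g-7}\,\Gamma(n+3g-3)\,\Gamma\bigl(n+3g-\tfrac{5}{2}\bigr)$. Combined with the numerator $\Gamma(n+4g-3)\cdot\Gamma(n+(5g-5)/2)$ and the basic ratio $\Gamma(n+\alpha)/\Gamma(n+\beta)\sim n^{\alpha-\beta}$, the three numerator gammas and two denominator gammas collapse into $n^{(4g-3)+(5g-5)/2-(3g-3)-(3g-5/2)} = n^{g/2}$, with residual constant $\pi^{1/2}\,2^{-n-4g+6}/\Gamma((5g-5)/2)$.

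Finally I would invoke the identity $(5g-5)(5g-3)\,\Gamma\bigl(\tfrac{5g-5}{2}\bigr) = 4\,\Gamma\bigl(\tfrac{5g-1}{2}\bigr)$ (immediate from $\Gamma(z+1)=z\Gamma(z)$) and substitute formula~\eqref{IZformula}, namely $\langle\tau_2^{3g-3}\rangle_g/(3g-3)! = 24^{-g}c_g/[(5g-3)(5g-5)]$: the spurious factor $(5g-5)(5g-3)$ cancels, and $24^{-g}\cdot 2^{-4g+6}$ combines into $64/384^g$, reproducing precisely the announced expression for $\kappa_g$. The cases $g=0$ and $g=1$, not covered by~\eqref{thmgn}, follow separately from the closed formulas~\eqref{thm0n}--\eqref{thm1n}: at $g=0$ one gets exact equality, while at $g=1$ one uses $n!/(2n-1)!! = 2^n(n!)^2/(2n)! \sim \sqrt{\pi n}/2^n$ and checks that this matches $\kappa_1 \sqrt{n}\,\pi^{2n}/2^n$. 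There is no genuine obstacle---the whole argument is controlled Stirling bookkeeping---so the only real care needed is in tracking the powers of $2$, $\pi$, and $24$ through the final simplification and in confirming that the subleading $j\geq 1$ terms are indeed uniformly of lower order.
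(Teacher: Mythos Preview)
Your argument is correct and is exactly the approach the paper intends: the paper offers no detailed proof, merely stating that the corollary follows ``immediately'' from Proposition~\ref{prp1lem} and~\eqref{IZformula}, and your Stirling/duplication computation on the dominant $j=0$ term of~\eqref{thmgn}, together with the direct checks for $g=0,1$ from~\eqref{thm0n}--\eqref{thm1n}, is precisely that verification.
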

\noindent The reader may notice that certain universality 
found in~\cite{DYZ} about asymptotics of enumerations 
related to~$\overline{\mathcal{M}}_{g,n}$ 
reappears in~\eqref{largen}, \eqref{kappag}. 
The first few~$\kappa_g$ are given by 
$\kappa_0 = 32/\pi^6$, $\kappa_1 = \pi^{\frac12}/3$,  $\kappa_2 = 7 \pi^6/1080$, 
$\kappa_3 = 245 \pi^{25/2}/7962624$.

\section{Relations for the MV volumes}\label{section3}
The goal of this section is to prove Theorem~\ref{thmequation} and Corollary~\ref{Kazdeduced}. 

\begin{proof}[Proof of Theorem~\ref{thmequation}]
It was shown by Buryak~\cite{Buryak} that the Hodge hierarchy associated with~$\Lambda(s)$ 
 is normal Miura equivalent \cite{DZ-norm,DLYZ1} 
to the intermediate long wave (ILW) hierarchy. 
To be precise, define~$\tilde u=\tilde u(\bt;s;\e)$ by 
\beq\label{Miuratimes}
\tilde u(\bt;s;\e) \:= \sum_{g=0}^\infty \e^{2g}  \frac{(-1)^g s^g}{2^{2g} (2g+1)! } \frac{\p^{2g} u}{\p t_0^{2g}},
\eeq
where $u$ is defined in~\eqref{defineu} with the specialization $\bs=\bs^*$; then $\tilde u$
satisfies~\cite{Buryak} the ILW hierarchy, which has the first two flows
\begin{align}
& \tilde u_{t_1} \= \tilde u \frac{\p \tilde u}{\p t_0}  
\+ \sum_{g\geq1} \frac{|B_{2g}|}{(2g)!} \e^{2g} s^{g-1} \frac{\p^{2g+1} \tilde u}{\p t_0^{2g+1}}\,, \label{ilw1}\\
& \tilde u_{t_2}\= \frac 1 2 \tilde u^2 \frac{\p \tilde u}{\p t_0} \+ \sum_{g\geq 1} \frac{|B_{2g}|}{(2g)!} \e^{2g} 
\frac{s^{g-1}}{4}\biggl(2 \p_{t_0} \Bigl(\tilde u\frac{\p^{2g} \tilde u}{\p t_0^{2g}}\Bigr) + \frac{\p^{2g+1} (\tilde u^2)}{\p t_0^{2g+1}} \biggr) \nn\\
& \qquad \qquad \+ \sum_{g\geq 2} \frac{|B_{2g}|}{(2g)!} (g+1) \e^{2g} s^{g-2} \frac{\p^{2g+1} \tilde u}{\p t_0^{2g+1}} \, .\label{ilw2-zh}
\end{align}

Let us now do the specialization~\eqref{specialvalue} with $s=1$, and denote the series $u(\bt^*;\bs^*;\e)|_{s=1}$, $\tilde u(\bt^*;s;\e)|_{s=1}$ by
$u=u(x,\e)$, $\tilde u=\tilde u(x,\e)$, 
respectively. Then $u(x,\e) =\e^2\p_x^2(\H(x,\e))$, and 
from~\eqref{Miuratimes} it follows that 
$\tilde u(x,\e)$ and $u(x,\e)$ are related by
\beq\label{utu}
 \tilde u \= \sum_{g=0}^\infty \e^{2g}  \frac{(-1)^g}{2^{2g} (2g+1)! } \frac{\p^{2g} u}{\p x^{2g}} \,.
\eeq

\begin{prop} \label{BuryakILW}
The series~$\tilde u = \tilde u(x,\e)$ satisfies the following non-linear equation:
\beq\label{EL}
x \+ \frac{\tilde u^2}{2} \+ \sum_{g=1}^\infty \e^{2g} \frac{|B_{2g}|}{(2g)!} \frac{\p^{2g} \tilde u}{\p x^{2g}} \= \tilde u \,.
\eeq
\end{prop}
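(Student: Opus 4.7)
The plan is to derive \eqref{EL} by combining the $L_{-1}$ string equation for the Hodge free energy $\HHod$ with the first flow~\eqref{ilw1} of the ILW hierarchy (at $s = 1$), both restricted to $\bt = \bt^*$. These two inputs will yield \eqref{EL} modulo an $\e$-dependent constant of integration, which can then be killed using the $T$-degree structure supplied by Lemma~\ref{Txstructure}.

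First I would invoke the string equation
\[\p_{t_0}\HHod \= \frac{t_0^2}{2\e^2} \+ \sum_{i\geq 1} t_i\,\p_{t_{i-1}}\HHod,\]
which holds for $\HHod$ because $\psi$-classes descend in the usual way under the forgetful morphism and the Chern characters of the Hodge bundle pull back cleanly. Differentiating this in $t_0$, restricting to $\bt^*$ (only $t_0 = x$ and $t_2 = 1$ survive in the sum), and differentiating once more in $x = t_0$, I obtain $\p_{t_1}u|_{\bt^*} = \p_x u - 1$. Now the Miura transformation~\eqref{Miuratimes} at $s = 1$ is a formal series in $\p_{t_0}$ alone, so it commutes with $\p_{t_1}$, and under evaluation at $\bt^*$ each $\p_{t_0}^{2k}$ becomes $\p_x^{2k}$; applying it to the previous identity (and using $\p_x 1 = 0$) produces $\p_{t_1}\tilde u|_{\bt^*} = \p_x\tilde u - 1$. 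On the other hand, the first ILW flow~\eqref{ilw1} at $s = 1$, rewritten as a total $\p_x$-derivative and restricted to $\bt^*$, reads
\[\p_{t_1}\tilde u\bigl|_{\bt^*} \= \p_x\!\Bigl(\frac{\tilde u^2}{2} \+ \sum_{g\geq 1}\frac{|B_{2g}|}{(2g)!}\,\e^{2g}\,\p_x^{2g}\tilde u\Bigr).\]
Equating the two expressions for $\p_{t_1}\tilde u|_{\bt^*}$ and integrating once in $x$ gives
\[\tilde u \,-\, x \,-\, \frac{\tilde u^2}{2} \,-\, \sum_{g\geq 1}\frac{|B_{2g}|}{(2g)!}\,\e^{2g}\,\p_x^{2g}\tilde u \= C(\e)\]
for some $C(\e)$ depending only on $\e$; this is exactly \eqref{EL} when $C(\e) = 0$.

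To show $C(\e) = 0$, I would expand in powers of $\e^2$. The $\e^0$ piece reduces to $\tilde u_0 - x - \tilde u_0^2/2 = C_0$ with $\tilde u_0 = 1 - T$, $T = \sqrt{1-2x}$, which a direct substitution shows is $0 = C_0$. For $g \geq 1$, Lemma~\ref{Txstructure} together with the explicit form of~\eqref{Miuratimes} implies that $\tilde u_g(x)$ is a Laurent polynomial in $T$ whose most positive power is bounded by $T^{-(4g-1)}$ (this power coming from the $\p_{t_0}^{2g}u_0$ summand of the Miura series applied to $u_0 = 1 - T$). Consequently every summand appearing in the $\e^{2g}$ coefficient of the identity above---namely $\tilde u_g$, the products $\tilde u_{g_1}\tilde u_{g_2}$ with $g_1 + g_2 = g$ (including the mixed term $\tilde u_0\tilde u_g = (1-T)\tilde u_g$), and the derivatives $\p_x^{2k}\tilde u_{g_0}$---is itself a Laurent polynomial in $T$ with only negative powers. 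Being simultaneously constant in $x$ (hence in $T$), this coefficient must vanish, giving $C(\e) = 0$ and thereby \eqref{EL}. The main technical obstacle is this last bit of $T$-degree bookkeeping---specifically, ruling out that the positive power $-T$ in $\tilde u_0$ times negative powers in $\tilde u_g$ produces a $T^0$ contribution---but the bound $k \geq 4g-1 \geq 3$ on the exponents $T^{-k}$ appearing in $\tilde u_g$ leaves more than enough room.
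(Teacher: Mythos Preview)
Your argument is correct, and it follows a route close in spirit to the paper's but with a genuinely different technical pivot. Both proofs start from the string equation, differentiate, and specialize to $\bt^*$; the divergence is in how the ILW input is used. The paper invokes the tau-symmetric Hamiltonian formalism: after the Miura transformation, the two-point function $\widetilde\Omega_{1,0}$ is identified \emph{as a differential polynomial} with the variational derivative $\delta\tilde{\bar h}_1/\delta\tilde u$, which immediately gives \eqref{EL} with no constant of integration to chase. You instead use only the evolutionary form of the first ILW flow~\eqref{ilw1}, which contains the same information up to a total $\p_x$-derivative; the price is one integration in~$x$ and a constant $C(\e)$ that you then eliminate using the $T$-degree structure from Lemma~\ref{Txstructure}. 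Your approach is more elementary in that it avoids the $\Omega_{i,0}$ machinery, but it is less self-contained in that it leans on the explicit genus expansion of~$\tilde u$ to pin down the constant. The paper's approach buys that normalization for free from the tau-structure. Both are valid; the degree-counting you sketch (highest power $T^{-(4g-1)}$ in $\tilde u^{[g]}$ for $g\ge1$, so that every term at order~$\e^{2g}$ carries only strictly negative powers of~$T$) goes through exactly as you indicate.
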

\begin{proof}
Recall that the Hodge partition function $Z=Z(\bt;\bs;\e):=e^{\H(\bt;\bs;\e)}$ satisfies the 
string equation (cf.~e.g.~\cite{DLYZ1,DuY}), that is,
\beq\label{stringeqZ}
\sum_{i=0} t_{i+1} \frac{\p Z}{\p t_i} \+ \frac{t_0^2}{2\epsilon^2} Z \+ \frac{s_1}{24} Z \= \frac{\p Z}{\p t_0} \,. 
\eeq
Dividing both sides of~\eqref{stringeqZ} by~$Z$ 
and differentiating with respect to~$x$ we obtain 
\beq\label{stringOmega}
\sum_{i=0}^\infty t_{i+1} \frac{\p^2 \H(\bt;\bs;\e)}{\p t_i\p x} \+ \frac{x}{\epsilon^2}  \= \frac{\p^2 \H(\bt;\bs;\e)}{\p x^2} \,. 
\eeq
We recall that 
\beq\label{twopointfunctions}
\e^2\frac{\p^2 \H(\bt;\bs;\e)}{\p t_i\p x} \= \Omega_{i,0}\bigl(u(\bt;\bs;\e),u_x(\bt;\bs;\e),\dots\bigr)\,,\quad i\geq0\,, 
\eeq
where $\Omega_{i,0}$ are certain differential polynomials~\cite{Buryak,DLYZ1} of~$u$. 
Then by using the Miura transformation~\eqref{Miuratimes} we obtain 
\beq\label{stringOmegatilde}
\sum_{i=0}^\infty t_{i+1} \widetilde \Omega_{i,0}\bigl(\tilde u(\bt;\bs;\e), \tilde u_x(\bt;\bs;\e),\dots\bigr) \+ x  
\= \tilde u(\bt;\bs;\e)\,. 
\eeq
Here $\widetilde\Omega_{i,0}$, $i\geq 0$ are differential polynomials of~$\tilde u$. 
Buryak~\cite{Buryak} showed that the Miura transformation~\eqref{Miuratimes} 
transforms the Hamiltonian structure~$P$
of the linear Hodge hierarchy to $\p_x$, in particular, $\tilde u(\bt;\bs;\e)$ satisfies the Hamiltonian system
\beq
 \frac{\p \tilde u}{\p t_1} \= \p_x \frac{\delta \tilde {\bar h}_1}{\delta \tilde u(x)}   \,,
\eeq 
where
\[ \tilde {\bar h}_1 \=  \int \biggl(\frac{\tilde{u}^3}{6} \+ \sum_{g=1}^\infty \frac{|B_{2g}|}{2(2g)!} \tilde u \tilde u_{2g} \biggr) dx\,. \]
Therefore, according to~\cite{DLYZ1} we know that 
\beq\label{Omega10}
\widetilde \Omega_{1,0} \= \frac{\delta \tilde {\bar h}_1}{\delta \tilde u(x)} \=  \frac{\tilde u^2}{2} \+ \sum_{g=1}^\infty \e^{2g} \frac{|B_{2g}|}{(2g)!} \frac{\p^{2g} \tilde u}{\p x^{2g}}  \,.
\eeq
Thus equation~\eqref{stringOmegatilde} with the specialization $s=1$
leads to~\eqref{EL}. The proposition is proved.
\end{proof}

We are in a position of proving equation~\eqref{theequation}.
Indeed, observe that 
\beq
\sum_{g\geq 1} \e^{2g} \frac{|B_{2g}|}{(2g)!} \p_x^{2g}  \= 1 \,-\,\frac{ i}2 \e \, \p_x  \,-\, \frac{i\e\p_x}{e^{i \e \p_x}-1},
\eeq
so it follows from ~\eqref{EL} that
\beq
x \+ \frac{\tilde u^2}{2} \,-\,\frac{ i}2 \e \, \p_x (\tilde u)  \,-\, \frac{i\e\p_x}{e^{i \e \p_x}-1} (\tilde u)  \= 0 \,.
\eeq
By using the fact that $\tilde u = -i \e \p_x (\H_+-\H_-)$ we arrive at equation~\eqref{theequation}. 

We will now prove equation~\eqref{anotherequation}.
We first switch on the $t_2$-dependence and denote it by~$t$ 
in the specialization~\eqref{specialvalue}. 
More precisely, we consider 
\begin{align}
& \H \= \H(x,t,\e) \:=  \sum_{g,n\geq 0} 
\sum_{j=0}^g \frac{\langle\lambda_j \tau_0^n \tau_2^{3g-3+n-j}\rangle}{(3g-3+n-j)!} \, \e^{2g-2} \frac{x^n}{n!} t^{3g-3+n-j} \, ,
\end{align}
and denote $\H_{\pm} := \H\bigl(x\pm\frac{i \e}{2},t,\e\bigr)$. 
Then by using equation~\eqref{ilw1} and an argument like the one we used above to derive  
equation~\eqref{theequation}, we find that~$\H$ satisfies the following equation:
\begin{align}
&t\frac{\e^2}2 \Bigl[\p_x(\H_+\,-\, \H_-)\Bigr]^2 \+ t\frac{\e^2}2 \p_x^2 \bigl( \H_+ + \H_-\bigr) \,-\, (1-t) \, i \,\e \, \p_x(\H_+\,-\, \H_-) \= x \,.\label{H21} 
\end{align}
Then by using equations~\eqref{ilw2-zh} and~\eqref{H21} we obtain the following equation for~$\H$:
\begin{align}
& -i \,\e\, \p_t \, (\H_+\,-\, \H_-) \= \frac16 \tu^3 + \frac34 \tu^2 + \tu - \frac{i\e}2 \tu \tu_x -\frac{3i\e}4 \tu_x - \frac{\e^2}6 \tu_{xx} 
 \+ \frac{x}{2t} - \frac{i\e}{4t}  \nn\\
& \qquad \qquad \qquad \qquad \qquad \quad - \frac{1+2t}{2t} \e^2 \p_x^2(\H_-) + \frac{i\e^3}{4} \p_x^3(\H_-) - \frac{\e^2}2  \tu \, \p_x^2(\H_-) \,. \label{H22}
\end{align}
Here we recall that $\tilde u=-i\e\p_x(\H_+\,-\, \H_-)$, and we also used Theorem~A to get the constant in~$x$ term $- i\e/4t $.
It is not difficult to deduce from Theorem~A the following homogeneity property for~$\H$:
\beq\label{homogenH}
t \frac{\p \H}{\p t} \+ \Bigl(x-\frac1t\Bigr) \frac{\p \H}{\p x} \+ \e \frac{\p \H}{\p \e} \= - \frac1{24} - \frac1{24 t} -\frac{x^2}{2 \e^2 t} \,.
\eeq
From the above equations \eqref{H21}--\eqref{homogenH} we arrive at equation~\eqref{anotherequation}. 
The theorem is proved.
\end{proof}

Let us proceed to prove Corollary~\ref{Kazdeduced}.
\begin{proof}[Proof of Corollary~\ref{Kazdeduced}] 
Differentiating equation~\eqref{anotherequation} with respect to~$x$ we obtain 
\[
\biggl(\e \p_x \p_\e  + \frac12 \p_x + \frac12x\p_x^2 - \frac{\e^2}{24} \p_x^4\biggr)  \bigl(\H_{+} - \H_{-}\bigr)    \+ 
\frac{\e^2}{4} \bigl[\p_x(\H_{+}-\H_{-})\bigr]^2  \bigl[\p_{x}^2(\H_{+}-\H_{-})\bigr]  \=  0 \,,\]
so from equation~\eqref{theequation} it follows that 
\[
\biggl(\e \p_\e  + \frac12  + x\p_x - \frac{\e^2}{24} \p_x^3\biggr) \circ \p_x \bigl(\H_{+} - \H_{-}\bigr)    \,-\,
\frac{\e^2}{4} \Bigl[ \bigl(\p_x^2 ( \H_+)\bigr)^2 - \bigl(\p_{x}^2(\H_{-})\bigr)^2\Bigr]  \=  0   \,. \]
Observing that $\bigl[ x  \p_x,e^{\pm i\e\p_x/2}\bigr]= \mp \frac{i\e}2 e^{\pm i\e\p_x/2} \p_x $ one can simplify this equation and find
\beq\label{equiv1kaz}
\Bigl(e^{\frac{i\e\p_x}2}-e^{-\frac{i\e\p_x}2}\Bigr)  
\Biggl[\biggl(\e \p_\e  + \frac12  + x\p_x - \frac{\e^2}{24} \p_x^3\biggr) \circ \p_x \bigl(\H \bigr)    \,-\,
\frac{\e^2}{4} \bigl[\p_x^2 ( \H )\bigr]^2 \Biggr]  \=  0   \,. 
\eeq
Since the operator $(e^{\frac{i\e\p_x}2}-e^{-\frac{i\e\p_x}2}) / \p_x$ 
is invertible on power series of~$x$, 
we find that equation~\eqref{equiv1kaz} is equivalent to
\beq\label{equiv2kaz}
\p_x  \Biggl[\biggl(\e \p_\e  + \frac12  + x\p_x - \frac{\e^2}{24} \p_x^3\biggr) \circ \p_x \bigl(\H \bigr)    \,-\,
\frac{\e^2}{4} \bigl[\p_x^2 ( \H )\bigr]^2 \Biggr]  \=  0   \,. 
\eeq
It follows that 
\[\biggl(\e \p_\e  + \frac12  + x\p_x - \frac{\e^2}{24} \p_x^3\biggr) \circ \p_x \bigl(\H \bigr)    \,-\,
\frac{\e^2}{4} \bigl[\p_x^2 ( \H )\bigr]^2 \= C(\e)\,,\]
where $C(\e)=\sum_{g\geq0} \e^{2g-2} C_g$ with $C_g$ being constants.  It remains to show that $C_g$ all vanish. Indeed, 
for $g=0$ and $g=1$, this can be verified directly with the explicit expressions of~$\H_0$ and~$\H_1$ given in Lemma~\ref{Txstructure}.
For $g\geq2$, by using Lemma~\ref{Txstructure} and the fact that $\p_x = -\frac{1}{T}\p_T$ we arrive at $C_g=0$.
The  corollary is proved.
\end{proof}

Let us now show that Corollary~\ref{Kazdeduced} implies Kazarian's recursion 
on the linear Hodge integrals    
  $\frac{(5g-3-j)(5g-5-j)}{(3g-3-j)!}\int_{\overline{\mathcal{M}}_{g,3g-3-j}} 
\lambda_j  \psi_{1}^2 \cdots \psi_{3g-3-j}^2$.
Indeed, differentiating~\eqref{deducedfrom} with respect to~$x$ we find that 
the series~$u=\e^2 \p_x^2(\H)$ satisfies the equation
\beq\label{equationforumay1}
2\e u_{\e} \+ 2x u_x \,-\, u  \=  \p_x \Bigl(\frac12 u^2\Bigr)\+\frac1{12}\e^2 u_{xxx} \,.
\eeq
Denote 
\beq\label{genusexpansionsuandutilde}
 u(x,\e) =: \; \sum_{g\geq 0}\e^{2g} u^{[g]}(x)\,.
 \eeq
Then we can write~\eqref{equationforumay1} equivalently as follows:
\beq\label{newrecursionforu}
\Bigl(4g-1+2x\frac{d}{dx}\Bigr) \bigl(u^{[g]}\bigr) \=  \frac12 \frac{d}{dx} \biggl(\sum_{g_1+g_2=g} u^{[g_1]}u^{[g_2]}\biggr)
 \+ \frac1{12} \frac{d^3}{dx^3} \bigl(u^{[g-1]}\bigr) \,,\quad g\geq 0\,.
\eeq
To proceed we note that it follows easily from Lemma~\ref{Txstructure} that $u^{[g]}(x)$ has the expression 
\begin{align}
& u^{[0]} = 1 - T \,, \quad  
u^{[1]} \= \frac{1}{12} \frac{1}{T^{4}} \+ \frac{1}{24} \frac{1}{T^{3}}  \,, \label{u0u1}\\
& u^{[g]} \= \sum_{j=0}^{g} \frac{\langle \lambda_j \tau_2^{3g-3-j} \rangle_g}{(3g-3-j)!} 
\frac{\prod_{i=0}^{1}(5g-5-j+2i)}{T^{5g-1-j}} \,, \quad g\geq 2\,. \label{expandug}
\end{align}
Thus using the 
fact that $\frac{d}{dx}= -\frac1T \frac{d}{dT}=:D_T$ we find that~\eqref{newrecursionforu} is equivalent to 
\beq\label{dtuu}
\bigl(4g-1+(1-T^2)D_T\bigr) \bigl(u^{[g]}\bigr) \=  \frac12 D_T \biggl(\sum_{g_1,g_2\geq 0 \atop g_1+g_2=g} u^{[g_1]}u^{[g_2]}\biggr) \+ \frac1{12} D_T^3 \bigl(u^{[g-1]}\bigr)\,.
\eeq
Substituting~\eqref{u0u1}, \eqref{expandug} into~\eqref{dtuu} we find 
\begin{align}
c_{g,j} \= & \frac{g+1-k}{5g-2-j}\, c_{g,j-1} \+ \frac{(5g-6-j)(5g-4-j)}{12} \, c_{g-1,j}\notag \\ 
& \+ \frac12 \sum_{g_1,g_2\geq 1, j_1,j_2\geq 0 \atop g_1+g_2=g,  j_1+j_2=j} c_{g_1,j_1} c_{g_2,j_2} \,,\qquad g\geq1,\, 0\leq j\leq g\,,\label{kaz}
\end{align}
where the numbers $c_{g,j}$ are defined by 
\beq
c_{g,j} \:=  \frac{\langle \lambda_j \tau_2^{3g-3-j} \rangle_g}{(3g-3-j)!}  \prod_{i=0}^{1}(5g-5-j+2i) \,.
\eeq
The recursion relations~\eqref{kaz} for $c_{g,j}$ were obtained by Kazarian~\cite{Kazarian} 
from the KP hierarchy~\cite{Kazarian1} satisfied by the linear Hodge integrals.

It is not clear at the moment whether Corollary~\ref{Kazdeduced} and Lemma~\ref{Txstructure} imply Theorem~\ref{thmequation}.

We end this section with two remarks on the computational aspects.
Firstly, as a consequence of equation~\eqref{theequation} and Lemma~\ref{Txstructure}, 
the~$u^{[g]}$ can be computed from the recursion 
\begin{align}
& u^{[0]} \= 1-T \,, \nn\\
& u^{[g]} \= \frac1{2T} \sum_{0\leq g_1,g_2\leq g-1 \atop g_1+g_2+j_1+j_2=g} \Bigl(-\frac14\Bigr)^{j_1+j_2}
\frac{D_T^{2j_1}\bigl(u^{[g_1]}\bigr) D_T^{2j_2}\bigl(u^{[g_2]}\bigr)}{(2j_1+1)!(2j_2+1)!} 
 \,-\, \frac1T \sum_{j=1}^g \Bigl(-\frac14\Bigr)^{j} \frac{D_T^{2j}\bigl(u^{[g-j]}\bigr)}{(2j)!}  \,, \nn
\end{align}
where $g\geq 1$. 
Then one can further compute $\H_g$, $g\geq 2$ from $u^{[g]}$ via
\beq
\HHod_g \= \sum_{j=0}^{g}  \frac{C_{g,j}}{T^{5g-5-j}} \,, \qquad 
C_{g,j} \= \frac{{\rm coefficient~of}~1/T^{5g-1-j}~{\rm in}~ u^{[g]}}{(5g-3-j)(5g-5-j)}  \quad (0\leq j\leq g)\,.
\eeq
Secondly, the series $\tilde u$ (see~\eqref{utu}) also presents good properties. Denote 
\beq\label{genusexpansionsutilde} 
\tilde u(x,\e) =: \; \sum_{g\geq 0}\e^{2g} \tilde u^{[g]}(x) \,. 
\eeq
If then follows from \eqref{u0u1}, \eqref{expandug}
that $\tilde u^{[g]}$ has the expression
\begin{align}
& \tilde u^{[0]} = 1 - T \,, \quad \tilde u^{[1]} \= \frac{1}{12 T^4} \,, \quad \tilde u^{[g]} \=  \sum_{j=0}^g  \frac{d_{g,j}}{T^{5g-1-j}} ~ (g\geq2)\,, \label{expandtu}
\end{align}
where $d_{g,j}\in \QQ$ are constants. In terms of intersection numbers we have for $g\ge 2$,
\begin{align}
& \tilde u^{[g]} \= 
\sum_{g_1=0}^{g-2}  \frac{(-1)^{g_1}}{2^{2g_1} (2g_1+1)! }  \sum_{j=0}^{g-g_1} \frac{\langle \lambda_j \tau_2^{3g-3g_1-3-j} \rangle_{g-g_1}}{(3g-3g_1-3-j)!} \frac{\prod_{i=0}^{1+2g_1}(5g-5g_1-5-j+2i)}{T^{5g-g_1-1-j}} \nn\\
&\qquad \quad   \+\frac{(-1)^{g-1}2^{2g}}{12}   \frac1{T^{4g}} \+ \frac{(-1)^{g} (4g-3)!!}{2^{2g} (2g+1)! }  \, \frac{5-2g}{6}\, \frac{1}{T^{4g-1}} \,.\nn
\end{align}
Substituting~\eqref{genusexpansionsutilde} 
into~\eqref{EL} we find that  $\tilde u^{[g]}$, $g \geq 0$ satisfy the following recursion
\begin{align}
& \tilde u^{[0]} \= 1 - T \,, \label{polyalg1}\\
& \tilde u^{[g]} \= \frac1{2T} \sum_{g_1=1}^{g-1}  \tilde u^{[g_1]}\tilde u^{[g-g_1]} \+ 
\frac1{T}\sum_{g_1=1}^{g} \frac{|B_{2g_1}|}{(2g_1)!} D_T^{2g_1} \bigl(\tilde u^{[g-g_1]}\bigr) \,, \quad g\geq 1\,. \label{polyalg2}
\end{align}
This recursion gives 
an algorithm for computing~$\tilde u$. 
From~\eqref{utu} we know that 
\[
u \=  \tilde u \+ \sum_{g\geq 1} \e^{2g} \frac{2^{2g-1}-1}{2^{2g-1}}  \frac{|B_{2g}|}{(2g)!}  D_T^{2g} (\tilde u) \,.
\]
Therefore, for $g\geq 0$, 
\[u^{[g]} \=\tilde u^{[g]} \+ \sum_{g_1=1}^g \frac{2^{2g_1-1}-1}{2^{2g_1-1}} \frac{|B_{2g_1}|}{(2g_1)!} 
D_T^{2g_1} \bigl( \tilde u^{[g-g_1]}\bigr) \,. \]
So this gives rise to another algorithm for computing the MV volumes. 
One could also use~\eqref{anotherequation} to study~$\tilde u$.

\section{Asymptotics of the area Siegel--Veech constants}\label{section4}
In this section we use Goujard's formula to compute the area Siegel--Veech (SV) constants 
associated with principal strata of moduli spaces of quadratic differetials. 
Indeed, according to Goujard~\cite{Goujard} the area 
SV constants can be expressed explicitly in terms of 
the number~$a_{g,n}$ as follows:
\beq\label{Goujardformula}
C_{\rm area} (\mathcal{Q}_{g,n}) \= \frac{\pi^{-2}}{4a_{g,n}} 
\Biggl(n(n-1)a_{g,n-1} \+  a_{g-1,n+2}  \; + \!\! \sum_{g_1,g_2\ge 0, \, n_1,n_2 \ge 1 \atop {g_1+g_2=g, \, n_1+n_2=n+2\atop 3g_i-3+n_i > 0 \, (i=1,2) }} 
\binom{n}{n_1-1} a_{g_1,n_1}a_{g_2,n_2} 
\Biggr) \,.
\eeq 
The result in this section is a refinement 
of the conjectural formula for the large~$g$ asymptotics 
of $C_{\rm area} (\mathcal{Q}_{g,n})$ given in~\cite{DGZZ,ADGZZ} to the following more precise asymptotic statement.

\begin{conj}\label{conjecture2}
For any fixed $n\geq 0$, we have the asymptotic formula
\beq
C_{\rm area} (\mathcal{Q}_{g,n}) \;\sim\;  \sum_{k=0}^\infty \frac{C_k(n)}{g^k} \,,\qquad g\to\infty \,,
\label{asvclargeg} 
\eeq
where each $C_k(n)$ is a polynomial with rational coefficients in~$n$ and $M=-\pi^2/144$, 
with the first four of them being 
\begin{align}
C_0(n) &\= \frac14\,, \quad C_1 (n) \= \frac1{48}n^2\,-\, \frac3{16} n \+ \frac{1-2 M}{4} \,, \nn\\
C_2 (n) & \= - \frac{5+12 M}{576} \,n^3 
\+ \frac{59+180 M}{576}\,n^2 
 \,-\, \frac{11+6 M-72 M^2}{32}  \,n 
 \+ \frac{23+15 M-648 M^2}{72} \,,  \nn \\ 
C_3 (n) &\= 
\frac{4+17M+54 M^2}{1152}\, n^4 \,-\, 
\frac{179+978 M+3564 M^2}{3456} \, n^3 \nn\\
 & \qquad \+ \frac{929+5169 M+13554 M^2-42768 M^3}{3456} \, n^2  \nn\\ 
 & \qquad \,-\, \frac{989+4851 M -4428 M^2 - 192456 M^3}{1728} \,  n \nn\\
 & \qquad \+  
 \frac{295+1165 M-16140 M^2-105300 M^3+253692 M^4}{720} \,.\nn 
\end{align}
\end{conj}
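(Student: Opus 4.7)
The plan is to derive Conjecture~\ref{conjecture2} from Conjecture~\ref{conjasymg} via Goujard's formula~\eqref{Goujardformula}, by expanding each ratio of $a$-values as a power series in $1/g$ whose coefficients are polynomials in $n$ and $M=-\pi^2/144$. A complete proof would first require establishing Conjecture~\ref{conjasymg} in its full refined form (only the leading ADGZZ term is proved in~\cite{A3}), so what is realistically available is a conditional derivation.

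First I would treat the ``simple'' ratios $a_{g,n-1}/a_{g,n}$ and $a_{g-1,n+2}/a_{g,n}$ appearing in~\eqref{Goujardformula}. Writing
\[ F(g,n):=\frac{(6g-7+2n)!}{(4g-4+n)!}\,\frac{2^{10g+4n-11}}{3^{4g+n-4}\pi^{6g-5+2n}}, \]
the quotients $F(g,n-1)/F(g,n)$ and $F(g-1,n+2)/F(g,n)$ are explicit rational functions of $g$ and $n$; a brief check shows $F(g,n-1)/F(g,n)\sim \pi^2/(48g)$ and $F(g-1,n+2)/F(g,n)\to \pi^2$, so that the overall prefactor $\pi^{-2}/4$ in~\eqref{Goujardformula} produces exactly the leading term $C_0(n)=1/4$. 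The series quotients $\bigl[\sum_k m_k(n\!-\!1)/g^k\bigr]/\bigl[\sum_k m_k(n)/g^k\bigr]$ and the $(g-1)$-shifted analogue then add $1/g$-corrections; since the $m_k(n)$ are polynomial in $n$ and $M$ by hypothesis, these corrections inherit the same structure and feed term by term into each $C_k(n)$.

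Second, the convolution sum $\sum \binom{n}{n_1-1}a_{g_1,n_1}a_{g_2,n_2}$ with $g_1+g_2=g$, $n_1+n_2=n+2$ must be split into a ``bulk'' piece (both $g_i$ scaling with $g$) and a ``boundary'' piece (one of $g_i$ bounded). Because $a_{g,n}$ grows factorially like $(6g)!/(4g)!\sim g^{2g}$ up to an exponential factor, terms with both $g_i\sim g/2$ are exponentially suppressed relative to $a_{g,n}$ and contribute to no order in $1/g$. For each fixed $g_1$, the ratio $a_{g_1,n_1}a_{g-g_1,n+2-n_1}/a_{g,n}$ reduces by the same mechanism to a rational factor times a series in $1/g$ with polynomial-in-$n$ coefficients; the explicit formulas for $a_{0,n}$ and $a_{1,n}$ in Proposition~\ref{prp1lem} determine the smallest-$g_1$ contributions. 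Symmetrizing under $g_1\leftrightarrow g_2$ and summing the (finitely many, at each order in $1/g$) boundary contributions, combined with the simple ratios from the previous step, should reproduce the explicit $C_k(n)$ claimed in Conjecture~\ref{conjecture2}, which one can then cross-check against the numerical values.

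The main obstacle, beyond the conjectural nature of the input, is making the bulk-versus-boundary split uniform enough to extract the full asymptotic expansion. One needs a quantitative super-polynomial bound on the bulk of the convolution sum, which I would hope to obtain directly from the recursion~\eqref{recursionagntwo} or from the ODE~\eqref{deducedfrom} for $\H$, together with a verification that the (in principle infinite) list of boundary pairs $(g_1,n_1)$ reassembles order by order into a \emph{polynomial} in $n$ rather than a more general function. Since Conjectures~\ref{conjasymg} and~\ref{conjecture2} were both recognized empirically via high-precision numerics and LLL, a principled proof will likely require first extracting full asymptotic control of the subleading coefficients $m_k(n)$ directly from the integrable-systems equations of Theorem~\ref{thmequation} before propagating them through~\eqref{Goujardformula}; only then would the polynomial structure of the $C_k(n)$ follow without relying on numerical recognition.
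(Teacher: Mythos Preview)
The paper does not prove Conjecture~\ref{conjecture2} at all: it is stated purely as an empirical observation obtained by computing $C_{\rm area}(\mathcal{Q}_{g,n})$ numerically for many $(g,n)$, fitting an asymptotic series in $1/g$, and recognizing the coefficients via LLL. Moreover, the paper explicitly addresses the very strategy you propose---deducing Conjecture~\ref{conjecture2} from Conjecture~\ref{conjasymg} through Goujard's formula---and says outright that they \emph{do not know} whether this works beyond the leading term (which is Aggarwal's argument in~\cite{A3}). So your proposal is not a reconstruction of the paper's argument; it is an attack on a question the authors leave open.

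As a research plan your sketch is sensible, and your bulk/boundary decomposition of the convolution sum is on the right track: the factorial growth $a_{g,n}\sim (6g)!/(4g)!\cdot(\text{exp})$ does make the balanced terms $g_1\sim g_2\sim g/2$ negligible to all orders, and for fixed $g_1$ the ratio $a_{g_1,n_1}a_{g-g_1,n_2}/a_{g,n}$ is indeed of order $g^{-(2g_1+n_1-2)}$, so only finitely many pairs contribute at each order in $1/g$. The polynomiality-in-$n$ concern you raise is milder than you suggest, since $n_1$ ranges over $\{1,\dots,n+1\}$ and each term carries a binomial $\binom{n}{n_1-1}$. The genuine gap is exactly the one you name at the end: none of this is a proof until Conjecture~\ref{conjasymg} itself is established with uniform control on the error terms, and the paper gives no mechanism for that. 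So what you have is a plausible conditional heuristic, not a proof, and it goes strictly beyond what the paper asserts or attempts.
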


The asymptotic formula~\eqref{asvclargeg} with $\sum_{k=0}^\infty C_k(n)/g^k$ replaced by~$1/4$ 
becomes the ADGZZ conjecture for the area SV constants.  As we mentioned in the Introduction, the 
above Conjecture~\ref{conjecture2} is also not based on theoretical reasoning but on numerical computations. 
Very recently Aggarwal~\cite{A3} proved the ADGZZ conjecture for the area SV constants by showing that   
the leading term asymptotics in~\eqref{mvlargeg} implies the leading term asymptotics in~\eqref{asvclargeg} 
with the knowledge of Goujard's formula~\eqref{Goujardformula}. 
However, we do not know whether
Conjecture~\ref{conjasymg} implies Conjecture~\ref{conjecture2} in the same way. This would be an interesting
point to investigate next.

\medskip
\medskip

\noindent Di Yang

\noindent School of Mathematical Sciences, University of Science and Technology of China\\
Hefei 230026, P.R.~China \\
diyang@ustc.edu.cn

\medskip

\noindent Don Zagier

\noindent Max-Planck-Institut f\"ur Mathematik, Bonn 53111, Germany \\
and International Centre for Theoretical Physics, Trieste 34014, Italy \\
\noindent dbz@mpim-bonn.mpg.de

\medskip

\noindent Youjin Zhang

\noindent Department of Mathematical Sciences, Tsinghua University \\ 
Beijing 100084, P.R.~China\\
youjin@mail.tsinghua.edu.cn

\end{document}